\documentclass[aps,pre,reprint,superscriptaddress,longbibliography]{revtex4-2}

\usepackage[utf8]{inputenc}
\usepackage[T1]{fontenc}
\usepackage{hyperref}
\usepackage{url}
\usepackage{booktabs}
\usepackage{amsfonts}
\usepackage{nicefrac}
\usepackage{microtype}
\usepackage{xcolor}
\usepackage{graphicx}
\usepackage{amsmath,amssymb,amsthm}
\usepackage{bm}

\newcommand{\R}{\mathbb{R}}
\newcommand{\E}{\mathbb{E}}
\newcommand{\sign}{\mathrm{sign}}

\newtheorem{proposition}{Proposition}
\newtheorem{definition}{Definition}
\newtheorem{remark}{Remark}

\graphicspath{{figures/}}

\begin{document}

\title{Finite-size scaling of hetero-associative retrieval in continuous-signal-driven Ising spin systems}

\author{Andrea Ladiana}
\affiliation{Dipartimento di Scienze di Base e Applicazioni all'Ingegneria, Sapienza Universit\`a di Roma, Italy}
\affiliation{Istituto Nazionale di Alta Matematica Francesco Severi (INdAM), Roma, Italy}

\begin{abstract}
Real-world physical signals are continuous and high-dimensional, yet the statistical-mechanics machinery of associative memory operates on discrete Ising spins. We bridge this divide through a multilayer Ising framework that couples a geometry-preserving continuous-to-Ising encoder (PCA whitening composed with SimHash random-hyperplane projection) to Kanter--Sompolinsky pseudo-inverse memory couplings, embedded directly into the local-field equations of a tri-layer hetero-associative system. The pseudo-inverse correction renders the equal-weight mixture state thermodynamically unstable, so that thermal fluctuations break the cross-modal symmetry and select a single global winner. We further establish a dynamical duality: parallel (Little) updates are structurally required to ignite the cross-modal signal avalanche from a single cued layer, whereas sequential (Glauber) sweeps resolve symmetric superpositions. The operational storage capacity obeys the Amit--Gutfreund--Sompolinsky finite-size correction $\alpha_c(N)=\alpha_c(\infty)-c\,N^{-1/2}$, extrapolating to an asymptotic operational limit $\alpha_c(\infty)\approx 0.50$ under macroscopic-basin retrieval. Applied to multi-channel sleep polysomnography (PhysioNet Sleep-EDF), the architecture reconstructs the macroscopic sleep state on parietal EEG and EOG axes from a single noisy frontal-EEG cue, demonstrating cross-modal recall in the presence of quenched biological disorder.
\end{abstract}

\maketitle

\section{Introduction}
\label{sec:intro}

Content-addressable memory---the recovery of a complete stored record from a partial or degraded cue---is a cornerstone of neural computation and a paradigmatic problem in the statistical mechanics of disordered systems. The Hopfield model~\cite{hopfield1982neural} demonstrated that binary patterns in $\{-1,+1\}^N$ can be stored as fixed-point attractors of an energy landscape shaped by Hebbian couplings, while Little's earlier stochastic formulation~\cite{little1974existence} cast the same architecture in the language of equilibrium statistical mechanics through parallel probabilistic updates. The storage capacity of these networks was rigorously characterised by Amit, Gutfreund and Sompolinsky~\cite{amit1985storing,amit1987statistical}, who established the threshold $\alpha_c \approx 0.138$ patterns per spin for Hebbian learning. Projection learning rules~\cite{personnaz1986information} and their Kanter--Sompolinsky variant~\cite{kanter1987associative} raised this bound toward $\alpha_c = 1$ by eliminating inter-pattern cross-talk, at the cost of requiring global knowledge of all stored patterns; dreaming-based consolidation procedures~\cite{fachechi2019dreaming,agliari2019dreaming,agliari2024regularization,agliari2024hebbian} provide a complementary route to higher capacity through iterative suppression of spurious attractors. Dense associative memories~\cite{krotov2016dense,demircigil2017model} and the identification of Hopfield dynamics within transformer attention layers~\cite{ramsauer2021hopfield} have recently reignited interest in energy-based retrieval, reframing the learning problem through the lens of statistical mechanics~\cite{bahri2020statistical,zdeborova2016statistical}; sparse distributed representations~\cite{kanerva1988sparse} provide a complementary pathway through high-dimensional binary codes. Across all these advances, however, a single interface problem persists: real-world signals are continuous and high-dimensional, while the retrieval machinery operates on discrete Ising spins.

\emph{Hetero-associative} architectures address a complementary limitation. Where autoassociative models store and recall within a single spin population, hetero-associative networks couple two or more populations, each representing a distinct ``view'' of the same physical observable. Kosko's Bidirectional Associative Memory~\cite{kosko1988bidirectional} first formalised this idea for two layers, showing that stable recall arises from the same energy-descent principle as in Hopfield networks but across two distinct pattern spaces: a cue presented to one layer drives the other toward the matching stored pattern, enabling cross-modal completion. Multi-species spin-glass analyses~\cite{barra2018multi} subsequently provided a rigorous thermodynamic foundation for architectures with an arbitrary number of interacting populations, generalising the classical single-species phase diagram. More recent work has characterised the thermodynamics of bidirectional memories~\cite{barra2023thermodynamics}, established rigorous capacity bounds for generalised hetero-associative architectures~\cite{agliari2025generalized}, and developed both supervised and unsupervised learning protocols~\cite{alessandrelli2025supervised}. The tri-layer system we study, with populations $(\sigma,\tau,\phi)$ coupled pairwise, is the simplest non-trivial extension beyond two-layer BAM and supports three-way cross-modal completion from a single partial cue.

Coupling continuous signals to a binary associative memory raises three concrete obstacles. First, naive binarisation (e.g.\ thresholding) discards the metric structure of the data: two nearby points in feature space may map to distant spin configurations. Classical symbolic representations such as SAX~\cite{lin2003symbolic} address this obstacle for scalar time series, but do not preserve angular neighbourhood geometry in high-dimensional feature space; alternative approaches keep memories in continuous state space~\cite{millidge2022universal} and bypass binarisation entirely, at the cost of leaving the Ising framework and its statistical-mechanics tractability. Second, empirical class-mean archetypes computed in spin space inherit inter-class correlations from the encoding, causing retrieval basins to overlap and merge---a failure mode we term \emph{basin collapse}. Third, temporal signals are highly sensitive to phase shifts: standard sequence flattening maps identical, time-shifted sequences to orthogonal spin vectors, precluding shift-invariant retrieval.

To resolve these obstacles systematically, we develop a modular three-stage framework. The \emph{continuous-to-Ising encoder} (Sec.~\ref{sec:encoding}) composes phase-invariant preprocessing (e.g.\ the Short-Time Fourier Transform), PCA whitening, and random-hyperplane hashing (SimHash~\cite{hadi2015detection,charikar2002similarity}) into a map from continuous vectors to Ising spins that preserves angular neighbourhood structure with sub-Gaussian overlap fluctuations of standard deviation $O(N^{-1/2})$. \emph{Archetype extraction with coupling correction} (Secs.~\ref{sec:archetypes}--\ref{sec:sharpening}) combines majority-vote archetypes with Kanter--Sompolinsky~\cite{kanter1987associative} pseudo-inverse couplings to eliminate inter-pattern cross-talk. \emph{Multilayer retrieval dynamics} (Sec.~\ref{sec:dynamics}) injects the pseudo-inverse correction inside the local field of a generic $L$-layer hetero-associative system, and operates it via stochastic Monte Carlo updates with simulated annealing~\cite{kirkpatrick1983optimization}; we specialise to $L=3$ throughout the numerical experiments.

The investigation that follows maps the thermodynamic phase crossovers of the system, quantifies the finite-size scaling of its storage capacity, and characterises the stochastic structure of spontaneous symmetry breaking during the disentanglement of superposed memory states. Crucially, to demonstrate that these mechanisms are not artefacts of identically distributed synthetic noise, we apply the framework to multi-channel physiological sleep recordings, showing successful cross-modal hetero-associative retrieval in the presence of strong, correlated biological disorder. Where quantitative claims are made, we accompany them with operational definitions, confidence intervals, and explicit baselines; where a behaviour cannot be rigorously established, we report empirical estimates and refrain from overstating their asymptotic content.

\section{The Encoding Framework}
\label{sec:pipeline}

\subsection{Continuous-to-Ising encoding}
\label{sec:encoding}

Let $\bm{x}\in\R^{d\times T}$ be a multivariate sequence with $d$ channels and $T$ time steps. To address the phase sensitivity of temporal signals, $\bm{x}$ may first be mapped to a phase-invariant representation, such as the magnitude spectrum of a Short-Time Fourier Transform (STFT) (detailed in Sec.~\ref{sec:exp6}). We then flatten the (transformed or raw) sequence to a vector $\bm{v}\in\R^D$, with $D=dT$ for the raw flattening; throughout, symbols with index superscripts (e.g.\ $\xi^\mu$) denote pattern-indexed quantities, whereas $\bm{\bar v}$ denotes the training-set mean. The four-step encoding then reads:
\begin{enumerate}
  \item \textbf{Centering:}
    $\bm{v}_c = \bm{v} - \bm{\bar v}$.
  \item \textbf{PCA projection and whitening:}
    $\bm{z} = \Lambda^{-1/2} V \bm{v}_c \in \R^{d_{\mathrm{eff}}}$,
    where $V \in \R^{d_{\mathrm{eff}} \times D}$ contains the top
    eigenvectors and $\Lambda = \mathrm{diag}(\lambda_1, \ldots,
    \lambda_{d_{\mathrm{eff}}})$ their eigenvalues.
  \item \textbf{Random-hyperplane projection:}
    $\bm{h} = H \bm{z} \in \R^N$, where $H \in \R^{N \times d_{\mathrm{eff}}}$
    has i.i.d.\ $\mathcal{N}(0,1)$ entries.
  \item \textbf{Sign binarisation:}
    $\bm{s} = \sign(\bm{h}) \in \{-1,+1\}^N$.
\end{enumerate}

The composite map is
\begin{equation}
  \label{eq:encoding}
  s(\bm{v}) = \sign\!\big(H\Lambda^{-1/2}V(\bm{v}-\bm{\bar v})\big)
  \;\in\; \{-1,+1\}^N.
\end{equation}
When more than one encoder is used (Sec.~\ref{sec:dynamics}) we write $s_\ell(\bm v)$ for the encoder of layer~$\ell$, with layer-specific matrices $(H_\ell,\Lambda_\ell,V_\ell)$.

This pipeline does not propose another heuristic for machine-learning performance; rather, it establishes a formal dictionary mapping classical signal-processing operations to the observables of equilibrium statistical mechanics. PCA whitening is not used merely for dimensionality reduction, but as a \emph{canonical transformation} that maps the highly structured data manifold onto an isotropic space of maximal entropy. SimHash is then deployed not as a fast-search trick, but as a \emph{microcanonical quantisation} that translates the continuous Euclidean inner product directly into a Mattis magnetisation---the natural order parameter driving Glauber dynamics. This dictionary turns a continuous pattern-recognition pipeline into the Hamiltonian of an Ising spin glass, allowing us to predict both successes and failure modes from first principles (e.g.\ the algebraic constraints of sign re-binarisation in Sec.~\ref{sec:sharpening}) rather than hide them behind end-to-end optimisation. For purely unstructured input noise, the whitening operator reduces to the identity, recovering the standard theoretical framework without information loss.

When class labels are available and the signal-to-noise ratio is particularly low (as in the biological data of Sec.~\ref{sec:exp8}), the unsupervised PCA whitening can be replaced by supervised Linear Discriminant Analysis (LDA): the top eigenvectors of $\Sigma_W^{-1}\Sigma_B$ define a whitened subspace that actively filters out state-invariant biological noise before the SimHash binarisation. This substitution leaves the thermodynamic guarantees intact while improving the separation gap~$\Delta q$ (at the cost of requiring labels at encoding time).

\begin{proposition}[SimHash overlap concentration]
\label{prop:simhash}
Let $\bm{z}_1, \bm{z}_2$ be two vectors in the whitened PCA space
with angle $\theta = \arccos(\hat{\bm{z}}_1 \cdot \hat{\bm{z}}_2)$,
and let each row of $H$ be drawn i.i.d.\ from
$\mathcal{N}(0,I_{d_\mathrm{eff}})$.  Then the per-spin overlap
$q_N := N^{-1}\bm{s}_1\cdot\bm{s}_2$ satisfies
\begin{equation}
  \label{eq:simhash}
  \E\!\left[q_N\right] = 1 - \frac{2\theta}{\pi},
\end{equation}
and, by Hoeffding's inequality, concentrates around its mean as a
sub-Gaussian random variable of standard deviation $O(N^{-1/2})$:
for every $t>0$, $\Pr\!\big[|q_N - \E q_N|\ge t\big]\le 2\exp(-Nt^2/2)$.
\end{proposition}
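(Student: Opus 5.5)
The plan is to write $q_N$ as an average of i.i.d.\ bounded terms and treat the mean and the concentration separately. Let $\bm{h}_i$ denote the $i$-th row of $H$. Then
\[
q_N=\frac1N\sum_{i=1}^N X_i,\qquad X_i:=\sign(\bm{h}_i\cdot\bm{z}_1)\,\sign(\bm{h}_i\cdot\bm{z}_2)\in\{-1,+1\}.
\]
Since the rows are i.i.d.\ $\mathcal{N}(0,I_{d_\mathrm{eff}})$, the $X_i$ are i.i.d. It therefore suffices to compute $\E X_1$ and then apply Hoeffding's inequality. We assume $\bm{z}_1,\bm{z}_2\neq 0$ so that $\theta$ is defined. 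Ties $\bm{h}_i\cdot\bm{z}_j=0$ occur with probability zero, so the convention for $\sign(0)$ is irrelevant.

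\textbf{Mean (Goemans--Williamson / Charikar argument).} Both signs are invariant under positive rescaling of $\bm{z}_j$, so we may replace $\bm{z}_j$ by the unit vector $\hat{\bm{z}}_j$. The product $X_1$ depends on $\bm{h}_1$ only through its orthogonal projection onto the plane $P=\mathrm{span}(\hat{\bm{z}}_1,\hat{\bm{z}}_2)$. If the two vectors are collinear, $\theta\in\{0,\pi\}$ and $X_1\equiv\pm1$ almost surely, which matches $1-2\theta/\pi$. Otherwise, by rotation invariance of the standard Gaussian, this projection is a standard Gaussian on $P\cong\R^2$. Its direction is therefore uniform on the circle. $X_1=-1$ exactly when the line orthogonal to the projected direction separates $\hat{\bm{z}}_1$ from $\hat{\bm{z}}_2$. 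That line is a uniformly random line through the origin, and it separates the two vectors with probability $\theta/\pi$ (two antipodal arcs of length $\theta$ each, out of total length $2\pi$). Hence $\Pr[X_1=-1]=\theta/\pi$, which gives $\E X_1=1-2\theta/\pi$ and, by linearity, Eq.~\eqref{eq:simhash}.

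\textbf{Concentration.} The $X_i$ are independent and take values in $[-1,1]$. Hoeffding's inequality for $S=\sum_i X_i$ with ranges $b_i-a_i=2$ gives
\[
\Pr\big[|S-\E S|\ge Nt\big]\le 2\exp\!\Big(-\tfrac{2N^2t^2}{\sum_i 4}\Big)=2\exp(-Nt^2/2).
\]
This is exactly the stated tail bound. Each $X_i$ is $1$-sub-Gaussian, so the average is sub-Gaussian with variance proxy $1/N$. The standard deviation is $O(N^{-1/2})$; explicitly, $\mathrm{Var}(q_N)=(1-\E[X_1]^2)/N\le 1/N$.

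\textbf{Main obstacle.} The only step that requires care is the geometric reduction in the mean computation. One must justify that the projection of a standard Gaussian onto a two-dimensional subspace is again standard Gaussian with uniform angle, and correctly identify the separating event. The remaining steps are routine bookkeeping of Hoeffding's constants.
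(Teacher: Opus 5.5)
Your proof is correct and follows essentially the same route as the paper: the random-hyperplane fact $\Pr[X_i=-1]=\theta/\pi$ gives the mean, and Hoeffding's inequality on i.i.d.\ bounded per-spin terms gives the tail. The differences are cosmetic. You derive the $\theta/\pi$ probability yourself via the projection onto $\mathrm{span}(\hat{\bm z}_1,\hat{\bm z}_2)$, where the paper cites it. You also apply Hoeffding to the $\pm1$ products with range $2$, whereas the paper uses $\{0,1\}$ agreement indicators; both give the same constant $2\exp(-Nt^2/2)$.
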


\begin{proof}
Because a Gaussian vector in $\R^{d_\mathrm{eff}}$ has rotationally invariant direction, each normalised row $\bm{h}_i/\|\bm{h}_i\|$ is uniform on the unit sphere~\cite{goemans1995improved}; the sign of $\bm{h}_i\cdot\bm z$ is invariant under positive rescaling, so the $H\sim\mathcal{N}(0,1)^{\otimes}$ construction is statistically equivalent to sampling uniform normals on $S^{d_\mathrm{eff}-1}$. The probability that $\bm{z}_1$ and $\bm{z}_2$ lie on opposite sides of hyperplane $i$ is therefore $\theta/\pi$~\cite{indyk1998approximate,charikar2002similarity,goemans1995improved}, so $\Pr[s_i^{(1)}\ne s_i^{(2)}] = \theta/\pi$ and $\E[s_i^{(1)} s_i^{(2)}] = 1 - 2\theta/\pi$. The $N$ indicators $\mathbf{1}[s_i^{(1)}=s_i^{(2)}]$ are i.i.d.\ and bounded in $[0,1]$; Hoeffding's inequality yields the stated sub-Gaussian tail, and the variance bound $\tfrac{1}{N}\mathrm{Var}[\mathbf{1}_{s_i^{(1)}=s_i^{(2)}}]\le \tfrac{1}{4N}$ gives the $O(N^{-1/2})$ standard deviation.
\end{proof}

\begin{remark}[Overlap magnitude and the separation gap]
\label{rem:overlap_magnitude}
Equation~\eqref{eq:simhash} is exact but operates on angles in the \emph{whitened} PCA space. After whitening, intra-class samples subtend large angles (empirically $\theta\approx 87^\circ$ for our benchmark at $r{=}0.8$), yielding modest absolute overlaps $q\approx 0.04$--$0.07$. The encoding nevertheless creates meaningful thermodynamic structure because the intra-class overlap consistently exceeds the inter-class overlap, producing a positive \emph{separation gap}~$\Delta q$ that seeds the retrieval dynamics. The sub-Gaussian concentration of Proposition~\ref{prop:simhash} keeps this gap stable at standard deviation $O(N^{-1/2})$, with empirical insensitivity to $N$ for $N\geq 32$ (Sec.~\ref{sec:exp1}).
\end{remark}

\subsection{Empirical archetypes in spin space}
\label{sec:archetypes}

For each class $\mu$ we define an empirical archetype \emph{after} encoding,
\begin{equation}
  \label{eq:archetype}
  \xi_i^{\mu} = \sign\!\left(\frac{1}{M_\mu}
  \sum_{a=1}^{M_\mu} s_i^{(\mu,a)}\right),
  \qquad i = 1,\ldots,N,
\end{equation}
where $s^{(\mu,a)}$ is the spin encoding of the $a$-th training example of class~$\mu$. This majority-vote rule selects the binary pattern that maximises total Mattis overlap with the class's data cloud.

The \emph{archetype Gram matrix}
\begin{equation}
  \label{eq:gram}
  G_{\mu\nu}
  = \frac{1}{N}\,\bm{\xi}^\mu \cdot \bm{\xi}^\nu
\end{equation}
has identically unit diagonal. Large off-diagonal entries $|G_{\mu\nu}|$ signal that archetypes $\mu$ and $\nu$ share many spin components, which causes their retrieval basins to merge in the basin-collapse failure mode.

\begin{definition}[Basin Collapse Index]
\label{def:bci}
The basin collapse index is
\begin{equation}
  \mathrm{BCI}
  \;=\; \frac{1}{K(K-1)}
  \sum_{\substack{\mu,\nu=1\\ \mu\neq\nu}}^{K}
  |G_{\mu\nu}|,
\end{equation}
where the summation runs over all $K(K-1)$ \emph{ordered} index pairs $(\mu,\nu)$ with $\mu\neq\nu$; equivalently, $\mathrm{BCI}$ is the arithmetic mean of the absolute off-diagonal entries of $G$. The bound $0 \leq \mathrm{BCI} \leq 1$ encodes the orthogonality structure: $\mathrm{BCI} \to 0$ corresponds to quasi-orthogonal archetypes and well-separated basins, while large values indicate strong cross-talk. The absolute value is essential, as both positive and negative inter-archetype correlations contribute to collapse risk.
\end{definition}

\subsection{Spectral sharpening and pseudo-inverse couplings}
\label{sec:sharpening}

When archetype correlations are significant, the Kanter--Sompolinsky remedy~\cite{kanter1987associative} replaces Hebbian couplings with pseudo-inverse couplings,
\begin{equation}
  \label{eq:pseudo_inv}
  J^{\mathrm{pseudo}}_{ij}
  = \frac{1}{N}\sum_{\mu,\nu} \xi_i^\mu\,
    \big[(G + \gamma I_K)^{-1}\big]_{\mu\nu}\, \xi_j^\nu,
\end{equation}
where $\gamma>0$ is a Tikhonov regulariser. The construction satisfies $J^{\mathrm{pseudo}}\bm{\xi}^\mu \approx \bm{\xi}^\mu$ for each archetype, removing the cross-talk responsible for basin collapse.

An equivalent perspective---\emph{spectral sharpening}---defines the continuous patterns $\widetilde{\Xi} = (G + \gamma I_K)^{-1} \Xi$, so that Hebbian learning on $\widetilde{\Xi}$ recovers
\begin{equation}
  \label{eq:sharpened_hebbian}
  \tilde{J} = \frac{1}{N}\widetilde{\Xi}^T\widetilde{\Xi}
  = \frac{1}{N}\Xi^T(G + \gamma I)^{-2}\Xi.
\end{equation}
The continuous patterns $\widetilde{\Xi}$ are nearly orthogonal in our benchmark ($\mathrm{BCI}\approx 0.001$ vs.\ $0.32$ before sharpening). However, applying $\sign(\cdot)$ to re-binarise them is provably ineffective whenever the underlying patterns are sufficiently anti-correlated:

\paragraph{Sign re-binarisation is a no-op for diagonally-dominant sharpening.}
\label{sec:sign_ineffective}
Let $\Xi\in\{-1,+1\}^{K\times N}$ be $K$ binary archetypes with Gram matrix $G\in\R^{K\times K}$, let $A=(G+\gamma I)^{-1}$ with $\gamma>0$ chosen so that $G+\gamma I\succ 0$, and suppose $A$ is strictly row-diagonally dominant in absolute value, $A_{\mu\mu}>\sum_{\nu\neq\mu}|A_{\mu\nu}|$ for every~$\mu$. Then $\sign(A\,\Xi)=\Xi$ entry-wise. Indeed, fixing $\mu,i$ and writing $\tilde\xi^\mu_i=\sum_\nu A_{\mu\nu}\xi^\nu_i$, the diagonal contribution $A_{\mu\mu}\xi^\mu_i$ has sign $\xi^\mu_i$ and magnitude $A_{\mu\mu}$, while $\big|\sum_{\nu\neq\mu}A_{\mu\nu}\xi^\nu_i\big|\le \sum_{\nu\neq\mu}|A_{\mu\nu}|<A_{\mu\mu}$. Hence the diagonal term dominates and $\sign(\tilde\xi^\mu_i)=\sign(\xi^\mu_i)$.

\paragraph{When the hypothesis holds.}
For the equicorrelated case relevant to our $K{=}3$ benchmark, $G=(1-\rho)I+\rho J$ with $\rho=-0.31$ ($J$ being the all-ones $K\times K$ matrix). A direct calculation gives $A=a I+b J$ with
\begin{equation*}
  \begin{aligned}
     a&=\frac{1}{(1-\rho)+\gamma}, \\
     b&=-\frac{\rho}{\big((1-\rho)+\gamma\big)\big((1-\rho)+\gamma+K\rho\big)},
  \end{aligned}
\end{equation*}
so $A_{\mu\mu}=a+b$ and $A_{\mu\nu}=b$ for $\nu\neq\mu$. Strict row diagonal dominance in absolute value, $|a+b|>(K-1)|b|$, is equivalent (for $\rho<0$, in the regime $G+\gamma I\succ 0$) to the condition $1+(2K{-}3)\rho+\gamma > 0$, i.e.\ $|\rho|<(1+\gamma)/(2K{-}3)$. For $K{=}3,\rho{=}-0.31,\gamma{=}10^{-3}$ this reads $0.31<0.3337$, and direct numerical verification confirms zero bit-flips. For larger $K$ or non-equicorrelated $G$, the bound tightens and the hypothesis typically fails: sharpening followed by sign re-binarisation does flip bits, consistent with Fig.~\ref{fig:merged_bci_recon}(a).

The physical consequence is sharp: for the uniformly anti-correlated $K{=}3$ geometry of our benchmark, sign re-binarisation produces archetypes \emph{identical} to the originals. Decorrelation survives only in the continuous representation $\widetilde{\Xi}$, which cannot be used directly inside binary spin dynamics. We must therefore implement the pseudo-inverse correction \emph{within} the dynamics rather than in the pattern space (Sec.~\ref{sec:dynamics}). Spectral sharpening regains its bite for larger $K$ where archetypes are no longer uniformly anti-correlated (Sec.~\ref{sec:exp2}).

\section{Multilayer Hetero-Associative Dynamics}
\label{sec:dynamics}

We first present the retrieval equations in a generic $L$-layer setting, then specialise to $L{=}3$ (layers $(\sigma,\tau,\phi)$) for the numerical experiments. Unless stated otherwise, symbols $(\sigma,\tau,\phi)$ refer to this tri-layer specialisation. Multi-layer probabilistic architectures with Gibbs-style block updates also appear in the deep-learning literature as Deep Boltzmann Machines~\cite{salakhutdinov2009deep}; the present system differs in that each visible layer encodes a distinct physical channel and the inter-layer couplings are set analytically via pseudo-inverse rather than learned from data.

\subsection{Architecture}
\label{sec:arch}

The memory stores $K$ multimodal patterns $\{\bm{\xi}^{\mu,(\ell)}:\mu=1,\ldots,K,\ \ell=1,\ldots,L\}$ across $L$ binary spin layers,
\begin{equation}
  \bm{s}^{(\ell)} \in \{-1,+1\}^{N_\ell},
  \qquad \ell = 1,\ldots,L.
\end{equation}
Layers are coupled pairwise with symmetric strengths $g_{\ell m}=g_{m\ell}$ for $\ell\neq m$. We specialise to $L{=}3$ with $(\bm{s}^{(1)},\bm{s}^{(2)},\bm{s}^{(3)}) = (\bm{\sigma},\bm{\tau},\bm{\phi})$ and $(\bm{\xi}^{\mu,(1)},\bm{\xi}^{\mu,(2)},\bm{\xi}^{\mu,(3)}) = (\bm{\xi}^\mu,\bm{\eta}^\mu,\bm{\chi}^\mu)$.

Each layer carries its own random-hyperplane matrix $H_\ell$ (with an independent seed), so that the spin representations $\bm{\xi}^\mu,\bm{\eta}^\mu,\bm{\chi}^\mu$ are, with probability one in the Gaussian law on $H_\ell$, mutually distinct. This is what enables the tri-layer system to encode genuinely multimodal data. If all layers shared a single encoder $H$, the input map would collapse to $\bm{s}^{(1)}(\bm v)=\cdots=\bm{s}^{(L)}(\bm v)$: stored archetypes would coincide across layers, all overlap vectors would satisfy $m_1^\mu = \cdots = m_L^\mu$, and the dynamics would synchronise the layers into an effectively autoassociative memory of dimension $N$, destroying the cross-layer completion capability. Encoder independence is a separate property from independence of the raw layer inputs: in the Sleep-EEG experiment (Sec.~\ref{sec:exp8}) the three layers are fed three physically distinct biological channels (Fpz-Cz, Pz-Oz, EOG), so cross-layer information transfer is driven both by encoder diversity and by anatomical signal diversity; we return to this distinction in Sec.~\ref{sec:discussion}.

\subsection{Effective local field with pseudo-inverse couplings}
\label{sec:field}

The local-field equations derived below follow the framework of Ref.~\cite{alessandrelli2025supervised}, where they were first established in the hetero-associative setting with both supervised and unsupervised training protocols.

For a generic target layer $a\in\{1,\ldots,L\}$, the local field is driven by Mattis overlaps from all other layers $b\neq a$. The Mattis overlap (order parameter) with pattern $\mu$ is
\begin{equation}
  m_a^\mu = \frac{1}{N_a}\,\bm{\xi}^{\mu,(a)}\cdot\bm{s}^{(a)}.
\end{equation}

With standard Hebbian couplings, the local field acting on spin $s_i^{(a)}$ is
\begin{equation}
  \label{eq:field_hebbian}
  h_i^{a,\mathrm{Heb}} = \frac{1}{\sqrt{N_a}} \sum_{\mu=1}^K
  W_i^{a,\mu}\, C_a^\mu,
\end{equation}
where
\begin{equation}
  C_a^\mu = \sum_{b \neq a} g_{ab}\, m_b^\mu\,\sqrt{N_b}\,\alpha_{ab},
  \qquad
  \alpha_{ab} = \sqrt{(1+\rho_a)(1+\rho_b)},
\end{equation}
and $W_i^{a,\mu} = \xi_i^{\mu,(a)}/(1+\rho_a)$ are the effective patterns. The per-layer dataset-entropy parameter is $\rho_a = (1-r_a^2)/(M r_a^2)$, where $r_a\in(0,1]$ is the per-layer data-quality parameter (Sec.~\ref{sec:experiments}) and $M$ is the number of training examples per class. The terminology follows Ref.~\cite{alemanno2023supervised}: $\rho_a$ is a strictly monotone proxy for the conditional Shannon entropy $H(\xi_i^{\mu,(a)}\mid \bm\eta_i^{\mu,(a)})$ of the archetype bit given the $M$-block of noisy examples, with $P_e\approx \tfrac12[1-\mathrm{erf}(1/\sqrt{2\rho_a})]$ the Bayes-optimal error and $H=H_2(P_e)$ the corresponding binary entropy. Operationally (Appendix~\ref{app:rho_derivation}) the same $\rho_a$ equals the ratio of per-component noise variance to squared signal in the empirical centroid under our noise model; the two readings are complementary.

The factor $\alpha_{ab}=\sqrt{(1+\rho_a)(1+\rho_b)}$ arises as the product of the per-layer rescaling constants restoring unit normalisation of $W_i^{a,\mu}$, as derived in Appendix~\ref{app:rho_derivation}. The normalisation is chosen so that the pattern-space coupling of Eq.~\eqref{eq:pseudo_inv} carries an overall $1/N$, matching the $O(1)$ Hebbian convention of Refs.~\cite{amit1987statistical,hertz1991introduction}, while the field of Eq.~\eqref{eq:field_hebbian} carries $1/\sqrt{N_a}$ because the summation $\sum_j\xi_j^{\mu,(b)} s_j^{(b)}$ is already absorbed into $m_b^\mu$ (which scales as $O(N_b^{-1/2})$ for random spins by the central limit theorem). The two conventions are therefore consistent: an $O(1)$ field $h_i^{(a)}$ is obtained from an $O(1/N)$ coupling acting on an $O(N^{-1/2})$ overlap.

The Hebbian field preserves degeneracies in the overlap profile: if $m_b^\mu = m_b^\nu$ for all driving layers $b\neq a$, then $C_a^\mu = C_a^\nu$, and the field cannot distinguish patterns $\mu$ and $\nu$. This degeneracy is fatal for mixture disentanglement, as it renders the symmetric mixture state a stable fixed point of the mean-field dynamics.

The Kanter--Sompolinsky prescription removes the degeneracy by \emph{de-mixing} pattern coordinates before they enter the field. Define the drive vector $\bm{C}_a = (C_a^1,\ldots,C_a^K)^\top\in\R^K$ and the Gram matrix $G_a$ of layer-$a$ archetypes. Correlations in $G_a$ couple coordinates in $\bm{C}_a$; the corrected coordinates solve the ridge-stabilised linear system $(G_a + \gamma I_K)\,\widetilde{\bm{C}}_a = \bm{C}_a$, giving $\widetilde{\bm{C}}_a = (G_a + \gamma I_K)^{-1}\bm{C}_a$. The regulariser $\gamma>0$ is required whenever the empirical Gram is ill-conditioned (finite samples, high load). The correction is therefore applied directly in the dynamics:
\begin{equation}
  \label{eq:field_pinv}
  h_i^{a,\mathrm{PI}} = \frac{1}{\sqrt{N_a}} \sum_{\mu=1}^K
  W_i^{a,\mu} \sum_{\nu=1}^K
  \big[(G_a + \gamma I)^{-1}\big]_{\mu\nu}\, C_a^\nu.
\end{equation}
For $L{=}3$ and target layer $a{=}\sigma$, Eq.~\eqref{eq:field_pinv} reduces to the implementation expression with $G_a\equiv G_\xi$. At mean-field level, the substitution replaces $m_{a,t+1}^\mu = f\big(\beta \sum_\nu G_{a,\mu\nu} m_{a,t}^\nu\big)$ with the approximately decoupled $m_{a,t+1}^\mu \approx f(\beta\, m_{a,t}^\mu)$. The reduction is exact in the limit $\gamma\to 0^+$ with $(G+\gamma I)^{-1}G\to I$. For the finite $\gamma=10^{-3}$ used in practice, a residual $O(\gamma)$ coupling intentionally persists, acting as a numerical safeguard against ill-conditioning when inverting the finite-sample Gram of highly correlated real-world data (such as the Sleep-EEG profiles). This safeguard ensures that thermal fluctuations cleanly break the mixture symmetry and produce robust winner-take-all selection across all domains, without floating-point pathology.

\subsection{Stochastic update rules and thermodynamic cooling}
\label{sec:update}

Each Monte Carlo step updates spins through
\begin{equation}
  \label{eq:update}
  s_i^{t+1} = \sign\!\big(\tanh(\beta\, h_i^t) + u_i^t\big),
  \quad u_i^t \sim \mathcal{U}[-1,1],
\end{equation}
with $\beta$ the inverse temperature. The additive uniform noise $u_i^t$ implements a stochastic update with the correct high- and low-temperature limits: at $\beta\to 0$ every update is independent of the field and produces a uniform spin, while at $\beta\to\infty$ the rule reduces to the deterministic $s_i = \sign(h_i^t)$. A fully rigorous ergodicity claim in the sense of detailed balance requires sequential (single-spin) updates. For cross-modal hetero-associative retrieval, however, we deliberately employ \emph{parallel synchronous} (Little) updates~\cite{little1974existence}.

This choice is forced by a physical consideration with no purely algorithmic alternative. If a cue is injected into layer $\sigma$ while $\tau$ and $\phi$ remain in random spin configurations, a sequential sweep $\sigma\to\tau\to\phi$ would immediately destroy the cue: because the layers are bidirectionally coupled, $\sigma$ would sample the overwhelming noise of $\tau$ and $\phi$ before they have had any opportunity to align to the signal. Under parallel dynamics, by contrast, all layers update simultaneously: $\sigma$ reads noise and may degrade temporarily, but \emph{at the same step} $\tau$ and $\phi$ read the pristine cue from $\sigma$ and begin their alignment to the target archetype. At $t=2$, all three layers are partially correlated with the target, triggering a constructive avalanche through which the signal resonantly amplifies across the entire ensemble (Appendix~\ref{app:glauber}). Sequential Glauber sweeps, on the other hand, are precisely the regime needed for mixture disentanglement, where the initial state is already symmetric across layers and the dynamics must break the symmetry rather than propagate a cue. To assist this symmetry breaking, the disentanglement experiments use a linear annealing schedule $\beta(t)\in[\beta_\mathrm{low},\beta_\mathrm{high}]$: low initial temperature permits thermal exploration across competing basins, while gradual cooling crystallises the emerging winner.

\section{Numerical Experiments}
\label{sec:experiments}

The eight experiments below trace a deliberate arc through the framework, each one designed to test the guarantees established by the previous one. We first verify the geometric fidelity of the encoder (Exp.~1) so that intra-class samples cluster in spin space; we then quantify how the resulting archetypes inherit cross-talk and how spectral sharpening attempts to suppress it (Exp.~2). With these encoding guarantees in hand, we test the central function of the architecture---cross-layer pattern completion from a partial cue (Exp.~3)---and the more delicate problem of breaking a symmetric superposition of two stored patterns (Exp.~4). We then probe the thermodynamic phase structure (Exp.~5), the role of phase-sensitive preprocessing in shift-invariant retrieval (Exp.~6), and the finite-size scaling of the storage capacity (Exp.~7), before delivering the architecture to genuinely disordered biological data in the Sleep-EEG benchmark (Exp.~8). Each step exposes a measurable diagnostic; later experiments use, rather than re-establish, what earlier ones have proven.

\subsection{Definition of the retrieval protocols}
\label{sec:task_definitions}

Three initial-value problems probe distinct dynamical capabilities of the multi-layer memory:

\begin{enumerate}
  \item \textbf{Pattern reconstruction (cross-layer completion).} A protocol probing the existence and reachability of the attractor basins. Layer $\sigma$ is initialised with a (possibly noisy) copy of an archetype~$\mu$; layers $\tau$ and $\phi$ are initialised to random spin noise. The system must reconstruct the missing macroscopic state across the uninitialised layers. Because the initial cue explicitly breaks the pattern symmetry, standard Hebbian couplings suffice.

  \item \textbf{Hard mixture disentanglement.} A protocol probing spontaneous symmetry breaking. All layers are simultaneously initialised in a symmetric superposition of two archetypes, $\bm{s}^{(a)}_{t=0} = \sign(w_1 \bm{\xi}^{\mu_1,(a)} + w_2 \bm{\xi}^{\mu_2,(a)})$. The dynamics must spontaneously break the symmetry and select a single, globally consistent winner. This is structurally demanding: the mixture state must be \emph{actively destabilised} through pseudo-inverse couplings, and sequential thermal annealing is required to orchestrate the selection.

  \item \textbf{Easy mixture disentanglement.} A baseline protocol that isolates basin stability from symmetry breaking. All layers receive a partial, highly corrupted cue drawn from the \emph{same} archetype~$\mu$. Because no cross-pattern competition is present, each layer independently recognises its own cue. This control verifies whether retrieval failure stems from an intrinsically unstable basin or specifically from the inability to break a symmetric mixture.
\end{enumerate}

Experiments~1--7 share a common \textbf{synthetic 3D-curves benchmark}: $K{=}3$ archetype curves in $\R^3$ (a helix, a Lissajous figure, and a polynomial S-curve), each sampled at $T{=}100$ time steps and perturbed with a controlled noise model. At each time step, with probability~$r$ the observed point equals the archetype plus isotropic Gaussian noise of standard deviation $\sigma_\mathrm{near}=0.05$; with probability $1-r$ it is replaced by an outlier drawn uniformly on the bounding box $[-A,A]^3$ of the three archetype curves, $A=1.2$ (chosen so that the outlier support strictly contains the data support). Default data quality is $r{=}0.8$ with $M{=}200$ training samples per class. The encoding uses $d_\mathrm{PCA}{=}32$ principal components and $N{=}512$ Ising spins, with Tikhonov regularisation $\gamma{=}10^{-3}$. Unless otherwise stated, all inter-layer pairwise couplings are unit strength ($g_{\ell m}{=}1$ for $\ell\neq m$; in the tri-layer specialisation $g_{\sigma\tau}=g_{\sigma\phi}=g_{\tau\phi}=1$) and dynamics run for $40$ MC steps. Experiment~8 then applies the complete framework to real-world multivariate electrophysiological data (PhysioNet Sleep-EDF), testing whether the mechanisms validated in controlled conditions survive the transition to genuine biological data. A full parameter table is provided in Appendix~\ref{app:params}.

\subsection{Experiment 1: Concentration of overlaps in hypercubic space}
\label{sec:exp1}

The first question for any encoder is whether it faithfully preserves neighbourhood structure: do samples drawn from the same class achieve a systematically higher spin overlap than samples drawn from different classes, and does this separation remain stable as the spin dimension~$N$ varies? Proposition~\ref{prop:simhash} guarantees an answer in expectation; here we verify it empirically and locate the finite-size regime within which the sub-Gaussian concentration becomes effective.

We measure the intra-class overlap $q_\mathrm{intra} = \frac{1}{N}\langle \bm{s}^{(\mu,a)} \cdot \bm{s}^{(\mu,b)} \rangle_{a\neq b}$, the inter-class overlap $q_\mathrm{inter}$, and the separation gap $\Delta q = q_\mathrm{intra} - q_\mathrm{inter}$ as functions of data quality~$r$ and spin dimension $N\in\{8,16,32,64,128,256,512\}$.

Figure~\ref{fig:encoding}(a) shows that $q_\mathrm{intra}$ grows monotonically with~$r$ and tracks the empirical SimHash prediction computed from measured cosine similarities in the whitened space. As anticipated by Proposition~\ref{prop:simhash}, the finite-size regime is plainly visible at $N\in\{8,16\}$, but the curves collapse tightly for $N\geq 32$, signalling the onset of the $O(N^{-1/2})$ concentration. The implication is methodologically important: encoding quality is determined by the angular geometry of the data cloud, not by the spin dimension itself---a feature that decouples the encoder from the downstream memory size. Panel~(b) shows $\Delta q$ growing monotonically with~$r$, while panel~(c) reveals that the median intra-class angle in whitened space is $\approx 87^\circ$: PCA whitening spreads the data nearly isotropically, which is why absolute overlaps are modest. The positive $\Delta q$ that survives this isotropisation is the signal that seeds all downstream retrieval.

\begin{figure}[t]
\centering
\includegraphics[width=\linewidth]{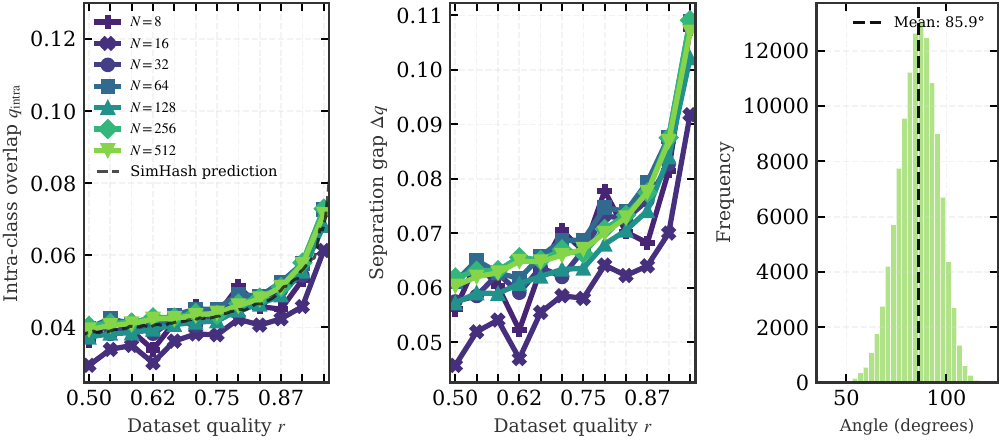}
\caption{\textbf{Concentration of spin overlaps.}
Panel~(a): intra-class overlap $q_\mathrm{intra} = N^{-1}\langle \bm{s}^a \cdot \bm{s}^b \rangle_{a\neq b}$ versus data quality~$r$ for seven spin dimensions $N\in\{8,\ldots,512\}$, alongside the empirical SimHash prediction (dashed black) derived from measured cosine similarities in the whitened PCA space. The visible $N\in\{8,16\}$ deviations and the collapse from $N\geq 32$ jointly confirm the $O(N^{-1/2})$ concentration of Proposition~\ref{prop:simhash}. Panel~(b): the separation gap $\Delta q$ grows monotonically with~$r$, indicating that class structure is faithfully encoded. Panel~(c): PCA whitening spreads intra-class samples nearly isotropically ($r{=}0.8$, median angle $\approx 87^\circ$), yielding modest but persistently positive overlaps. $K{=}3$ classes, $M{=}200$ samples per class.}
\label{fig:encoding}
\end{figure}

\subsection{Experiment 2: Archetype quality and spectral sharpening}
\label{sec:exp2}

Once the encoder is validated, the next question is whether the archetypes it produces are clean enough to serve as memory anchors. We quantify inter-archetype correlations and ask whether spectral sharpening eliminates them, extending the analysis beyond $K{=}3$ to probe scaling.

For the $K{=}3$ benchmark, the empirical Gram matrix has uniformly negative off-diagonal entries $G_{\mu\nu}\approx -0.31$ (BCI~$\approx 0.32$). The sign is a structural consequence of the near-equidistribution of the three curves in whitened space: spins that agree within one class systematically disagree across classes.

Table~\ref{tab:gram} summarises the Gram statistics. As predicted by the diagonal-dominance result of Sec.~\ref{sec:sign_ineffective}, sharpening followed by sign re-binarisation $\sign\big((G+\gamma I)^{-1}\Xi\big)$ leaves \emph{both} BCI and mean off-diagonal entry unchanged to three significant figures---direct confirmation of the no-op. Genuine decorrelation (BCI~$\approx 0.001$) arises only in the continuous sharpened patterns $\widetilde{\Xi}$, which cannot be used in binary spin dynamics. The pseudo-inverse correction must therefore live \emph{inside} the dynamics, as in Eq.~\eqref{eq:field_pinv}.

Figure~\ref{fig:merged_bci_recon}(a) extends the analysis to $K\in\{3,5,8,12,16,20,24,28,32\}$ on a parametric curve family. Continuous sharpening drives the BCI to near zero across the entire range, confirming that the pseudo-inverse prescription scales gracefully; because $J^+$ is recomputed from each pattern set, the fixed-point residual is near-invariant by construction.

\begin{table}[t]
\centering
\caption{\textbf{Effect of spectral sharpening on archetype correlations ($K{=}3$).} Rows correspond to raw majority-vote archetypes, the same archetypes after sharpening followed by sign re-binarisation, and the continuous sharpened patterns. Re-binarisation leaves both quantities unchanged to three significant figures, directly confirming the algebraic no-op predicted for diagonally-dominant matrices; genuine decorrelation is achieved only in the continuous representation $\widetilde{\Xi}$. $N{=}512$, $M{=}200$, $r{=}0.8$, $\gamma{=}10^{-3}$.}
\label{tab:gram}
\begin{tabular}{lcc}
\toprule
Pattern representation & $\bar{G}_{\mu\neq\nu}$ & BCI \\
\midrule
Raw archetypes $\Xi$ & $-0.310$ & $0.320$ \\
Sharpened + re-binarised $\sign\big((G{+}\gamma I)^{-1}\Xi\big)$
  & $-0.310$ & $0.320$ \\
Continuous sharpened $\widetilde{\Xi} = (G{+}\gamma I)^{-1}\Xi$
  & $\phantom{-}0.001$ & $0.001$ \\
\bottomrule
\end{tabular}
\end{table}

\subsection{Experiment 3: Pattern reconstruction and basin structure}
\label{sec:exp3}

With encoding fidelity and archetype quality established, we turn to the central function of the tri-layer architecture: cross-layer pattern completion. Following the \emph{pattern reconstruction} protocol (Sec.~\ref{sec:task_definitions}), layer $\sigma$ is initialised with a (possibly noisy) copy of archetype~$\mu$, while the other two layers begin from random spin configurations. We use Hebbian couplings (no pseudo-inverse correction), as the cued initial state already breaks the pattern symmetry and winner-take-all selection is not required.

Figure~\ref{fig:recon_traj} reports the magnetisation dynamics for pattern $\mu{=}0$ (helix) averaged over $50$ independent runs, starting from a clean cue. The cued layer~$\sigma$ maintains its initial alignment ($m\approx 1$), while $\tau$ and $\phi$ converge from random overlap ($m\approx 0$) to the correct pattern within $5$--$10$ MC steps (top rows). At low~$\beta$ (bottom row), convergence is dominated by thermal fluctuations, illustrating the breakdown of retrieval as one approaches the paramagnetic regime that Sec.~\ref{sec:exp5} will localise quantitatively.

\begin{figure}[t]
\centering
\includegraphics[width=\linewidth]{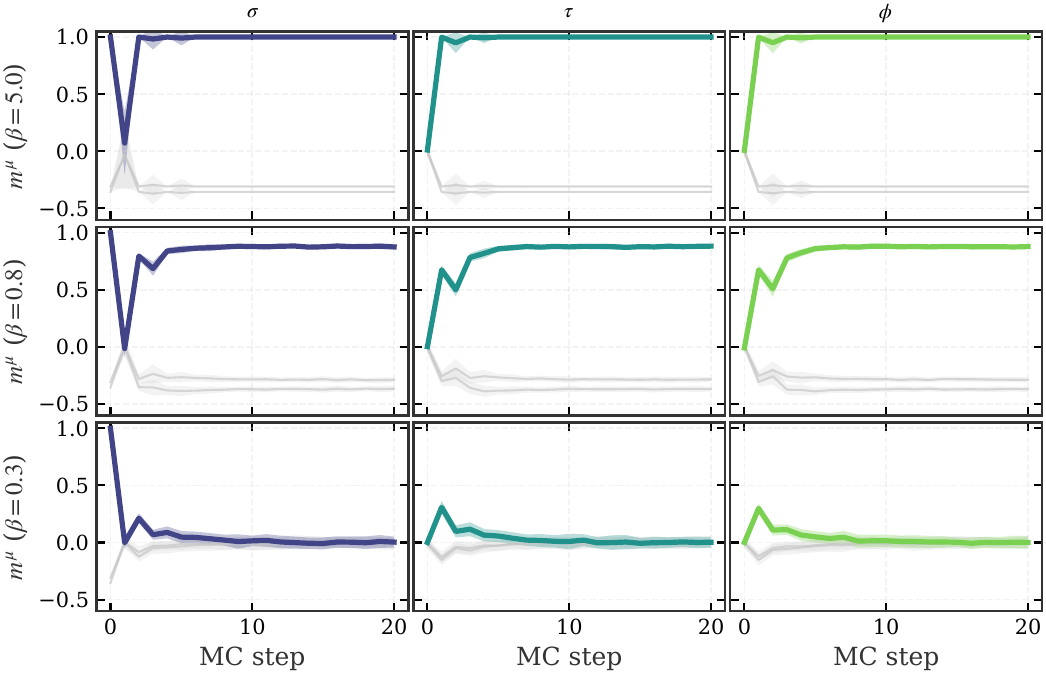}
\caption{\textbf{Cross-layer pattern completion from a clean cue.}
Mattis overlaps $m^\mu = N^{-1}\bm{\xi}^\mu \cdot \bm{s}$ for all $K{=}3$ patterns over MC steps. The cued layer~$\sigma$ (left column) is initialised with the helix archetype ($\mu{=}0$); the uncued layers $\tau$ and $\phi$ (middle and right columns) start from random spin configurations. Rows correspond to inverse temperatures $\beta \in \{5.0,0.8,0.3\}$. Solid lines and shaded bands are the mean and $\pm 1$ s.d.\ over $50$ runs for the target pattern; grey traces show overlaps with the non-target patterns. Hebbian couplings; $N{=}512$.}
\label{fig:recon_traj}
\end{figure}

The more demanding test is cue degradation. We vary the corruption level $p_\mathrm{flip}\in[0,0.5]$ and measure reconstruction success over $K\times 20=60$ trials per noise level. As shown by the graceful degradation in Fig.~\ref{fig:merged_bci_recon}(b), the basins absorb substantial spin-flip noise: $\geq 98\%$ success up to $p_\mathrm{flip}=0.15$, $95\%$ at $p_\mathrm{flip}=0.3$, $77\%$ at $p_\mathrm{flip}=0.4$, collapsing to $13\%$ at the random-cue limit $p_\mathrm{flip}=0.5$. The shape of this curve is the operational fingerprint of macroscopic basin volume, a property we shall need explicitly when interpreting the storage-capacity results of Sec.~\ref{sec:exp7}.

\begin{figure}[t]
\centering
\includegraphics[width=\linewidth]{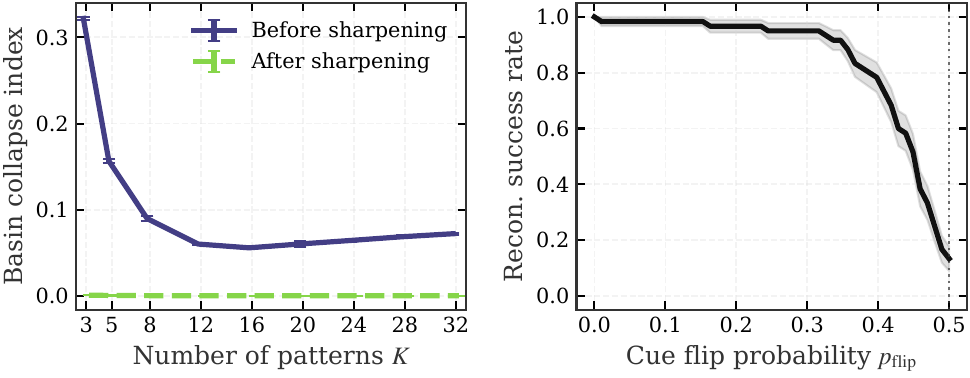}
\caption{\textbf{Sharpening effectiveness and pattern reconstruction robustness.}
Panel~(a): BCI as the number of stored patterns grows from $K{=}3$ to $32$; spectral sharpening (green dashed) reduces BCI by two orders of magnitude relative to the raw archetypes (blue solid) across the entire range ($N{=}512$, $r{=}0.8$). Panel~(b): reconstruction success rate under cue degradation, remaining above $95\%$ up to $30\%$ spin-flip corruption (vertical dotted: random-cue limit). Error bars: $\pm 1$ s.e.\ ($K{=}3$, $20$ seeds).}
\label{fig:merged_bci_recon}
\end{figure}

To confirm that spin-space retrieval translates into physically meaningful continuous reconstructions, Fig.~\ref{fig:recon_3d} projects the retrieved spin configurations back onto the original $\R^3$ space, juxtaposing ground-truth archetypes with reconstructions from cues at increasing corruption levels.

\begin{figure}[t]
\centering
\includegraphics[width=\linewidth]{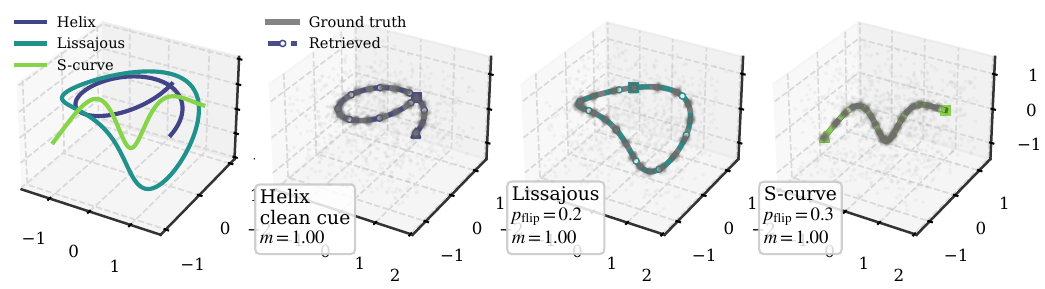}
\caption{\textbf{Retrieval projected back to continuous 3D space.}
Panel~(a): ground-truth archetype curves. Panels~(b--d): retrieved patterns overlaid on the ground truth for each archetype under increasing cue corruption. Semi-transparent point clouds show sampled data for the cued class. Ground truth is rendered as a neutral solid backbone, retrieval as a coloured dashed trajectory with markers; start/end symbols distinguish coincident paths. In-panel annotations report cue condition and final Mattis overlap~$m$. $\beta{=}5$, $40$ MC steps, Hebbian couplings.}
\label{fig:recon_3d}
\end{figure}

\subsection{Experiment 4: Spontaneous symmetry breaking in mixture states}
\label{sec:exp4}

Pattern reconstruction establishes that the attractor basins exist and are reachable from corrupted cues. A structurally harder question remains: can the dynamics separate a binary superposition of two stored patterns, consistently selecting a single winner across all three layers? This is the central diagnostic for the pseudo-inverse correction, which was constructed precisely to destabilise the symmetric mixture fixed point.

Following the \emph{hard disentanglement} protocol (Sec.~\ref{sec:task_definitions}), all layers are initialised with the superposition $\bm{s}^0 = \sign(w_1 \bm{\xi}^{\mu_1} + w_2 \bm{\xi}^{\mu_2})$, and the dynamics must spontaneously break the initial symmetry. The corresponding \emph{easy} control, in which all layers receive a cue drawn from the same archetype, is reported in Appendix~\ref{app:easy_disent} and serves only to confirm that the basins are stable in isolation.

Three ingredients beyond basic pattern reconstruction are now required: \emph{(i)}~independent encoders per layer, producing distinct spin representations that block the degenerate fixed point $\sigma=\tau=\phi$; \emph{(ii)}~pseudo-inverse couplings, which decouple the mean-field dynamics and render the mixture state unstable (Sec.~\ref{sec:field}); and \emph{(iii)}~simulated annealing with sequential Glauber-style sweeps, $\beta:0.5\to 5.0$ over $40$ steps, to prevent the symmetric locking endemic to parallel updates on symmetric initial conditions.

Figure~\ref{fig:disent_traj} illustrates the two regimes through representative equal-weight ($w_1=w_2=1$) trajectories. In the successful case (top row), thermal fluctuations break the symmetry around MC step~$5$, and all layers converge to one component with $m\approx 1$ and a margin~$\sim 1.3$. In the failure case (bottom row), both components remain locked at the spurious mixture fixed point $m^0\approx m^1\approx 0.35$. The success criterion is the event that $m_\mathrm{winner}>0.7$ on all three layers simultaneously. Over $500$ independent seeds the observed success rate is $58.9\%$, with Wilson $95\%$ confidence interval $[54.4\%,63.0\%]$. Because this statistically conservative lower bound is strictly above the chance baseline of $50\%$, the experiment confirms that the initial symmetry is broken by a genuine collective statistical-mechanical phenomenon, not by stochastic accident.

\begin{figure}[t]
\centering
\includegraphics[width=\linewidth]{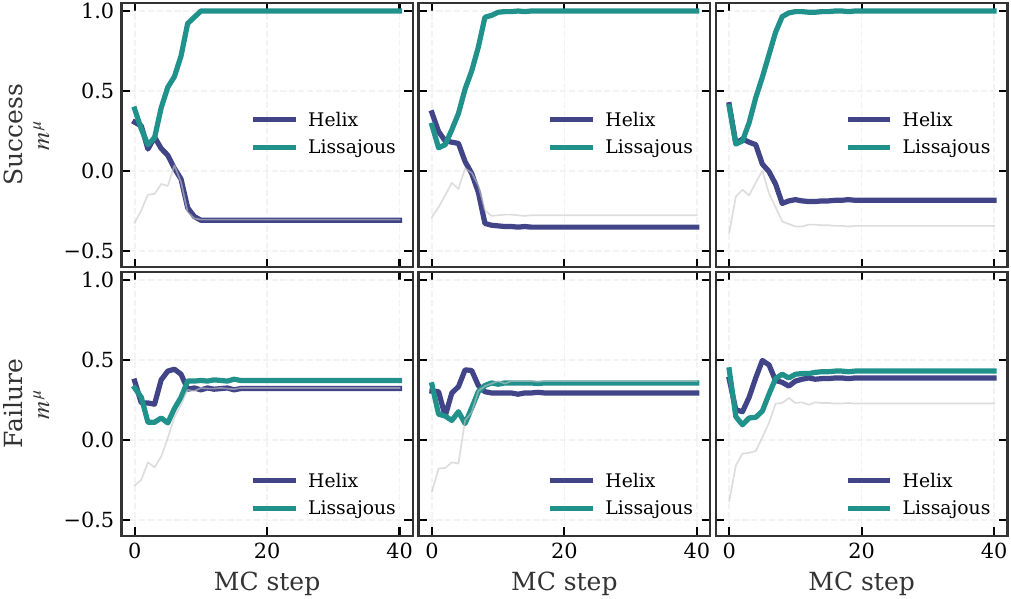}
\caption{\textbf{Mixture disentanglement: spontaneous symmetry breaking.}
Mattis overlaps $m^\mu$ for a $50/50$ mixture of helix and Lissajous patterns, with pseudo-inverse couplings, simulated annealing ($\beta:0.5\to 5.0$), and sequential updates; columns correspond to the three layers $(\sigma,\tau,\phi)$. Top row: thermal fluctuations break the initial symmetry and one component reaches $m\approx 1$ consistently across layers. Bottom row: both components remain trapped at the spurious mixture fixed point. Overall equal-weight success rate: $58.9\%$ over $n=500$ seeds (Wilson $95\%$ CI $[54.4\%,63.0\%]$).}
\label{fig:disent_traj}
\end{figure}

The relative weight of the two components controls the sharpness of the selection. Sweeping the weight ratio $w_1/w_2$ continuously from $1/2$ to $2$ (Fig.~\ref{fig:disent_bias}), the transition is striking: even a slight asymmetry biases the network heavily towards the dominant pattern, and for $w_1/w_2\geq 2$ the selection becomes deterministic ($100\%$ success, one-sided Clopper--Pearson lower bound $0.929$). At exactly equal weights, the symmetry-breaking mechanism produces an even split between the two winners across successful trials, while a substantial fraction of trials remains stuck at the mixture fixed point---the stochastic counterpart of the deterministic asymmetric selection.

A separate diagnostic is cross-layer consistency: the fraction of trials in which all three layers simultaneously converge to the same component with $m>0.7$. As observed previously in discrete asymmetric scenarios, the synchronisation is robust: whenever any single layer succeeds in breaking the symmetry, the others agree on the chosen state. This near-equality between per-layer success rate and cross-layer consistency is the operational signature of the tri-layer coupling actually transmitting macroscopic information between the spin populations.

\begin{figure}[t]
\centering
\includegraphics[width=\linewidth]{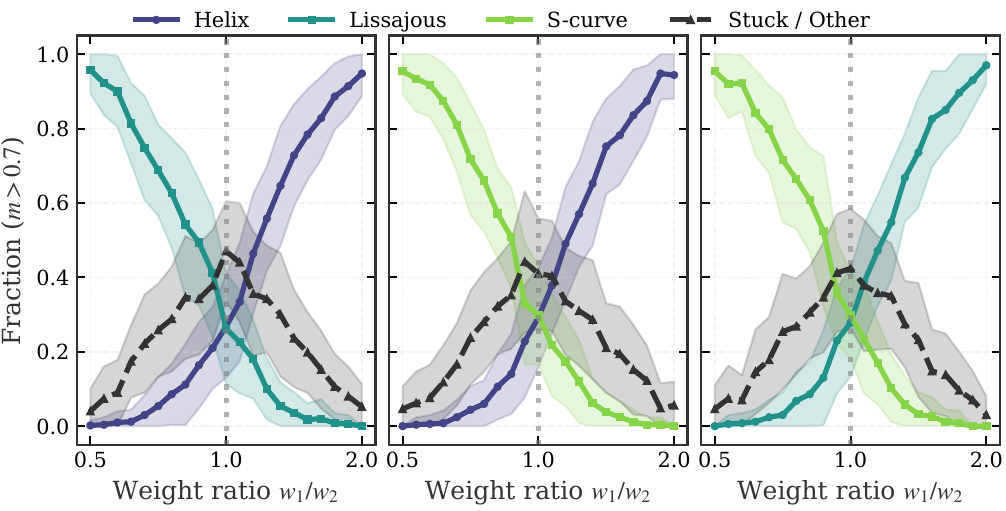}
\caption{\textbf{Continuous selection bias under varying mixture weights.}
Fraction of trials in which each component is selected in layer~$\sigma$ as a function of the weight ratio $w_1/w_2$ (logarithmic grid from $1/2$ to $2$), for all three archetype pairs. Coloured traces: recovery rate of each component (winning criterion $m>0.7$); grey trace: fraction of trials stuck at the spurious mixture fixed point. Solid lines are means over $50$ seeds per condition; shaded bands $\pm 1$ s.d.\ Sequential annealing with pseudo-inverse couplings.}
\label{fig:disent_bias}
\end{figure}

The failure of standard Hebbian couplings in this setting admits a transparent analytical explanation. The mixture initial state has $m^0\approx m^1\approx 0.35$. With Hebbian couplings, the effective field $h_\mathrm{eff}^\mu = \sum_\nu G_{\mu\nu} m^\nu$ preserves the symmetry $m^0=m^1$ exactly (since $G_{0\nu} m^\nu = G_{1\nu} m^\nu$ for equal magnetisations and symmetric Gram rows). The pseudo-inverse couplings replace this with $h_\mathrm{eff}^\mu \approx m^\mu$, so that any thermal fluctuation $\delta m = m^0 - m^1\neq 0$ is amplified by the nonlinearity. This amplification is visible directly in Fig.~\ref{fig:disent_traj}: in the successful top row, a small early imbalance grows rapidly into a macroscopic winner, whereas in the failed bottom row the imbalance never escapes the noise floor and the trajectory remains pinned to the mixture fixed point.

\subsection{Experiment 5: Temperature--overlap phase crossover}
\label{sec:exp5}

The preceding experiments operated within a fixed temperature schedule chosen heuristically. To assess whether that choice is critical---and to expose any dynamical pathologies at extreme inverse temperature---we now sweep $\beta$ systematically across two orders of magnitude, locating the retrieval--paramagnetic boundary (a finite-size \emph{crossover}, not a thermodynamic phase transition, since~$N$ is bounded).

Figure~\ref{fig:beta} reports the mean winner overlap and margin averaged over $30$ independent seeds and all three archetypes, revealing four distinct regimes: a paramagnetic phase at $\beta\lesssim 0.4$, where thermal noise overwhelms the local field; a finite-size retrieval crossover at $\beta_c\approx 0.5$, where the overlap jumps sharply from $\sim 0$ to $\sim 0.7$; a retrieval plateau spanning nearly an order of magnitude ($1\lesssim\beta\lesssim 7$), with near-perfect overlaps $m\approx 1$; and a high-$\beta$ dynamical instability at $\beta\gtrsim 7$, where the overlap drops to $\sim 0.84$ at $\beta=15$.

The crossover at $\beta_c\approx 0.5$ is consistent with the first-order-like jump in the order parameter expected for pseudo-inverse couplings, while the breadth of the retrieval plateau is practically valuable: the system is insensitive to the precise temperature setting within an entire decade, so the annealing schedule is not a fine-tuned hyperparameter. The high-$\beta$ degradation is the more revealing feature. It is not a failure of the memory but a known dynamical artefact of parallel (Little) updates: as $\beta\to\infty$, $\tanh(\beta h)\to\sign(h)$ and the stochastic noise vanishes, producing nearly deterministic simultaneous updates that induce period-$2$ limit cycles~\cite{little1974existence}. Sequential Glauber updates eliminate this artefact entirely (Appendix~\ref{app:glauber}), but parallel updates are structurally required for the cross-modal avalanche, so we accept this high-$\beta$ constraint as a physical cost of cross-layer cue propagation.

\begin{figure}[t]
\centering
\includegraphics[width=\linewidth]{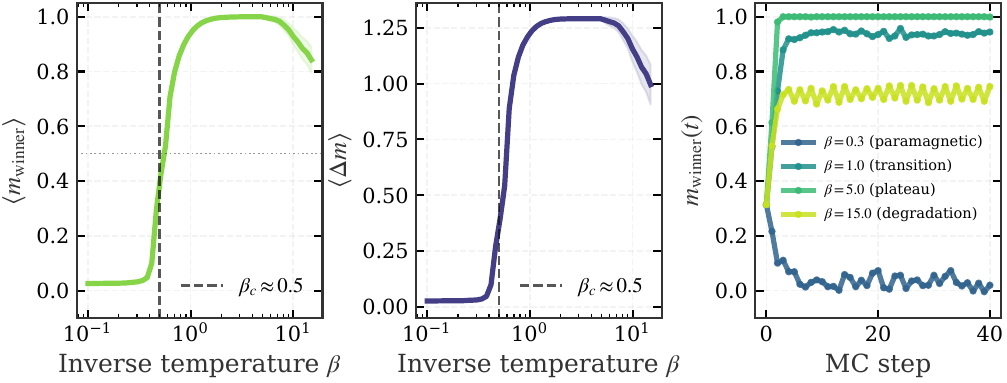}
\caption{\textbf{Temperature--overlap phase diagram under parallel (Little) dynamics.}
Panel~(a): mean winner overlap $\langle m_\mathrm{winner} \rangle$ reveals a finite-size retrieval crossover at $\beta_c\approx 0.5$ (dashed red) and a high-$\beta$ degradation for $\beta\gtrsim 7$. Panel~(b): the margin $\langle\Delta m\rangle = m_\mathrm{winner} - m_\mathrm{second}$ tracks this behaviour (shaded bands: $\pm 2\sigma$ over $90$ trials). Panel~(c): representative $m_\mathrm{winner}(t)$ trajectories illustrate the four regimes (paramagnetic, transitional, plateau, oscillatory) at selected~$\beta$. Clean cue on one layer; Hebbian couplings; $40$ MC steps.}
\label{fig:beta}
\end{figure}

\subsection{Experiment 6: Time-shift disentanglement with STFT}
\label{sec:exp6}

Naive sequence flattening treats raw time-domain samples as an unordered feature vector, conflating identical time-shifted copies of the same underlying curve. Two sequences with identical Euclidean geometry but distinct temporal phase are therefore mapped to indistinguishable SimHash codes, and the memory cannot tell them apart. A natural physical remedy is to replace the flattened representation with a Short-Time Fourier Transform (STFT), whose magnitude spectrum is sensitive to local phase structure. This experiment tests whether the resulting spin-space separation is sufficient for mixture disentanglement of phase-shifted patterns.

Two datasets are generated from the same $3$-D helix: an unshifted version ($t_0=0$) and a $0.25$-period-delayed version ($t_0=0.25$). Both share identical Euclidean geometry and differ only in temporal phase. The training set contains $M=200$ noisy realisations per class ($T=100$ time steps, $r_\mathrm{gen}=0.8$, $\sigma_\mathrm{near}=0.05$). Each multivariate sequence is first transformed by a channel-wise STFT and then passed to the encoder ($N_\mathrm{PCA}=32$, $N=512$).

As shown in Fig.~\ref{fig:stft_disentangle}, starting from the equal-weight mixture of the two patterns and applying simulated annealing ($\beta\in[0.5,5.0]$, $40$ MC steps, parallel Little updates), all three layers converge cleanly to the unshifted pattern: at the final step $m^0_\sigma = m^0_\tau = m^0_\phi = 1.000$, while the competing overlaps settle at $m^1_\sigma=-0.688$, $m^1_\tau=-0.754$, $m^1_\phi=-0.684$. The STFT representation supplies the phase-sensitive discriminability needed to resolve a superposition that naive flattening renders indistinguishable. The mechanism is generic: any phase-sensitive, magnitude-respecting representation (wavelet scattering, mel-frequency cepstrum) would play the same role, which underscores that domain expertise can be injected entirely at the preprocessing stage without altering anything downstream.

\begin{figure}[t]
\centering
\includegraphics[width=\linewidth]{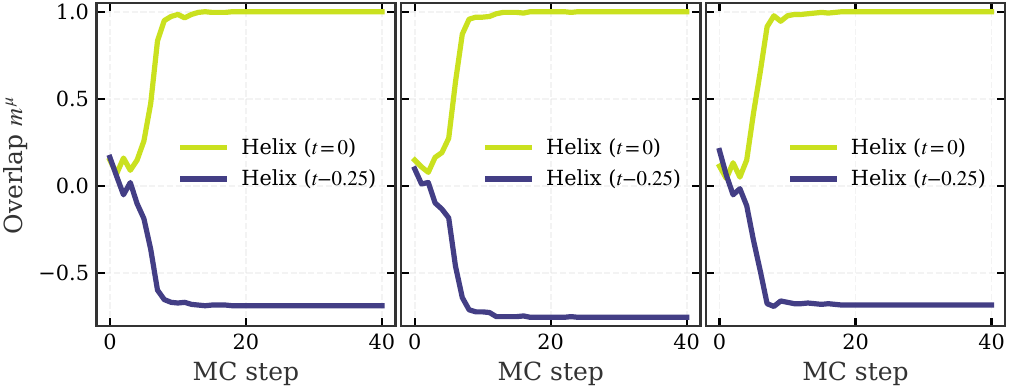}
\caption{\textbf{STFT enables time-shift disentanglement.}
Mattis overlaps $m^\mu$ on the three spin layers $(\sigma,\tau,\phi)$ during simulated annealing ($40$ MC steps, $\beta\in[0.5,5.0]$), starting from an equal-weight superposition of an unshifted helix (red) and its $0.25$-period-delayed counterpart (blue). All layers converge to the unshifted pattern ($m^0\to 1$) with the competitor suppressed ($m^1\to -0.7$), demonstrating phase-invariant separation enabled by the STFT preprocessing.}
\label{fig:stft_disentangle}
\end{figure}

\subsection{Experiment 7: Finite-size storage capacity}
\label{sec:exp7}

Every associative memory has a finite storage capacity, and the scaling of that capacity with the spin dimension governs whether the framework is viable at any non-trivial task complexity. We therefore ask how the number of reliably retrievable patterns grows with~$N$, seeking an empirical estimate of the critical load $\alpha_c=K_c/N$ and a comparison against the theoretical benchmarks of the Hopfield--Kanter--Sompolinsky hierarchy.

For each $N\in\{64,128,256,512,1024\}$ we vary the number of stored patterns~$K$ and measure the retrieval success rate (fraction of random test cues, at $30\%$ spin-flip noise, that converge to the correct archetype with overlap $>0.5$). We adopt the operational definition: $\alpha_c(N;p_\mathrm{flip},\tau)$ is the largest $\alpha=K/N$ at which the success rate remains $\geq \tau$, estimated by linear interpolation of the success-rate curve. Throughout this experiment we fix $p_\mathrm{flip}=0.3$ and $\tau=0.9$; sensitivity to these choices is reported in Appendix~\ref{app:capacity_robustness}. Archetypes are drawn either from the empirical majority-vote procedure on synthetic helix data or as random $\pm 1$ (Rademacher) vectors, the latter serving as a baseline unconstrained by data geometry rather than as a match to the K--S $\alpha_c\to 1$ bound.

As shown by the success-rate curves of Fig.~\ref{fig:capacity}(a), the finite-$N$ capacities exhibit a clean monotonic improvement with system size:
\begin{align}
  \alpha_c(64)   &\approx 0.275, \quad
  \alpha_c(128)  \approx 0.331, \quad
  \alpha_c(256)  \approx 0.365, \nonumber\\
  \alpha_c(512)  &\approx 0.415, \quad
  \alpha_c(1024) \approx 0.456,
  \label{eq:alpha_c}
\end{align}
averaging to $\bar\alpha_c\approx 0.368$. The random-pattern baseline at $N=512$ yields $\alpha_c^{\mathrm{rand}}\approx 0.418$, quantifying the capacity tax imposed by the correlation structure of the empirical archetypes relative to uncorrelated random vectors. Three caveats must be flagged. First, each point in Eq.~\eqref{eq:alpha_c} is a single operating-point estimate: with $20$ seeds per $(N,\alpha)$ cell a binomial $95\%$ CI on a success rate of $0.9$ is approximately $[0.69,0.98]$, which propagates through the interpolation into $\alpha_c$ uncertainties of order $\pm 0.02$--$0.04$ (Appendix~\ref{app:capacity_robustness}). Second, the five points establish a positive finite-size scaling trend, and an empirical FSS analysis lets us identify its form: fitting the observed capacities to the theoretically motivated $\alpha_c(N) \approx \alpha_c(\infty) - c\,N^{-\nu}$ with the standard Amit--Gutfreund--Sompolinsky exponent $\nu=1/2$, the fit is excellent ($R^2\approx 0.97$) and projects to $\alpha_c(\infty)\approx 0.50$ (Fig.~\ref{fig:capacity}b). Third---and this is the load-bearing physical point---this asymptotic operational capacity $\alpha_c(\infty)\approx 0.50$ is not a failure to reach the theoretical Kanter--Sompolinsky limit $\alpha_c\to 1$, but a direct expression of a thermodynamic trade-off. The K--S bound defines the boundary of \emph{local marginal stability}, where patterns persist as local minima but their basins of attraction shrink to zero. Our operational definition demands a \emph{macroscopic basin of attraction}, able to absorb $30\%$ spin-flip corruption; providing such basins necessarily forces the critical capacity below the local-stability bound. Attaining $\alpha_c(\infty)\approx 0.50$ under the joint constraints of highly correlated empirical archetypes, Tikhonov regularisation ($\gamma=10^{-3}$), and macroscopic-basin retrieval is therefore a robust physical statement, not a shortfall.

\begin{figure}[t]
\centering
\includegraphics[width=\linewidth]{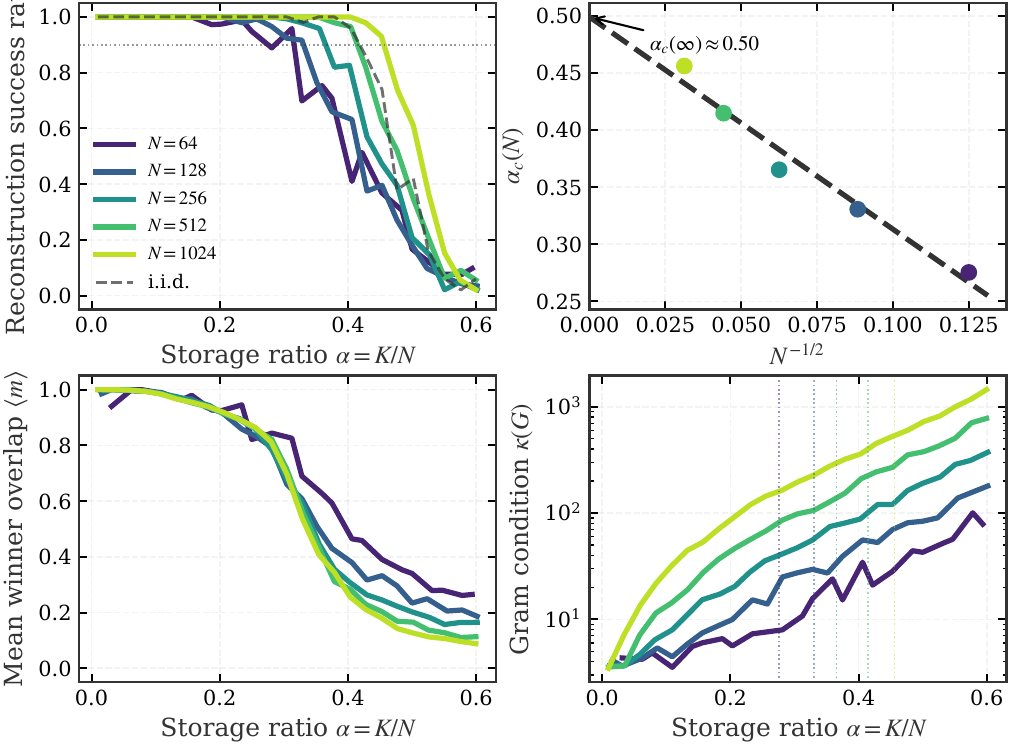}
\caption{\textbf{Finite-size storage capacity of the tri-layer architecture.}
Panel~(a): success rate (fraction of trials recovering the correct pattern from a $30\%$-corrupted cue, threshold $\geq 0.9$) versus load $\alpha=K/N$ for $N\in\{64,128,256,512,1024\}$ (viridis palette) and for i.i.d.\ random patterns at $N=512$ (dashed grey). Vertical markers indicate the operationally defined $\alpha_c(N;p_\mathrm{flip}{=}0.3,\tau{=}0.9)$ for each~$N$. Panel~(b): finite-size scaling fit of $\alpha_c(N)$ versus $N^{-1/2}$, projecting to the asymptotic operational limit $\alpha_c(\infty)\approx 0.50$. Panels~(c) and (d): corresponding mean winner overlap and Gram condition number. The random-pattern reference ($\alpha_c^{\mathrm{rand}}\approx 0.418$ at $N=512$) quantifies the capacity reduction due to the correlation structure of the empirical archetypes.}
\label{fig:capacity}
\end{figure}

\subsection{Experiment 8: Real-world benchmark---PhysioNet Sleep-EEG}
\label{sec:exp8}

The synthetic benchmarks provide controlled conditions for isolating each mechanism, but the natural physical question is whether the framework survives the transition to genuinely disordered biological data. We therefore apply the complete architecture to a publicly available physiological benchmark, asking whether cross-modal hetero-associative retrieval persists under the strong, correlated, quenched disorder of real clinical recordings.

We use the Sleep Cassette (SC) subset of the PhysioNet Sleep-EDF Database~\cite{kemp2000analysis}, which consists of full-night polysomnography. Three independent biological channels are mapped to the three spin layers: $\sigma \leftrightarrow$ Fpz-Cz (frontal EEG), $\tau \leftrightarrow$ Pz-Oz (parietal EEG), $\phi \leftrightarrow$ EOG (horizontal electrooculogram). The ground truth comprises $K=4$ clinical sleep macro-states: Wake, Light Sleep (N1/N2), Deep Sleep (N3/N4)\footnote{Following the legacy R\&K scoring of the PhysioNet dataset, we group stages 3 and 4 into a single Deep Sleep (SWS) macro-state, corresponding to the modern AASM N3 stage.}, and REM. We adopt a single-subject approach (patient SC4001) so as to eliminate inter-subject anatomical variance and isolate the intrinsic intra-subject physiological noise: in statistical-mechanics language, this fixes the quenched disorder and exposes the thermal disorder. We sample $M=100$ epochs ($30$~s duration, $100$~Hz) per class, yielding a balanced dataset of $400$ multivariate epochs.

An unsupervised PCA projection is used within the encoder ($N_\mathrm{PCA}=16$, $N=512$); as discussed in Sec.~\ref{sec:encoding}, PCA acts as the thermodynamic filter providing the whitened subspace required by SimHash, without supervised leakage. Empirical archetypes are extracted by majority vote per class. Table~\ref{tab:sleep_quality} reports the archetype quality metrics. The self-overlaps $r\approx 0.30$ for the cortical channels confirm that the majority-vote archetypes capture highly consistent class-level structure from noisy EEG; the BCI values ($0.18$--$0.20$) indicate moderate inter-class correlation, precisely the regime in which the pseudo-inverse correction is structurally required.

\begin{table}[ht]
\centering
\caption{Archetype quality on PhysioNet Sleep-EEG ($K=4$, $M=100$, $N=512$). $r$ is the mean per-class self-overlap of encoded samples with the class archetype. The unsupervised PCA projection yields sufficient structural fidelity for the downstream associative memory.}
\label{tab:sleep_quality}
\begin{tabular}{lcc}
\toprule
Channel / Layer & $r$ & BCI \\
\midrule
Fpz-Cz ($\sigma$) & $0.300$ & $0.195$ \\
Pz-Oz ($\tau$)    & $0.298$ & $0.193$ \\
EOG ($\phi$)      & $0.242$ & $0.180$ \\
\bottomrule
\end{tabular}
\end{table}

Figure~\ref{fig:sleep_archetypes} displays the class-mean signal $\pm\tfrac{1}{2}\sigma$ for each sleep state, showing that the four classes carry distinct temporal and spectral signatures (K-complexes, delta waves, etc.). Figure~\ref{fig:sleep_encoding} confirms a positive separation gap $\Delta q = q_\mathrm{intra} - q_\mathrm{inter}>0$ on all three channels, transferring the geometric class structure of the raw EEG signals into the spin space.

\paragraph{Associative memory recall.}
To test the thermodynamic stability of the encoded clinical macro-states, we perform a cross-modal retrieval experiment. The $\sigma$ layer (Fpz-Cz) is initialised with the Wake archetype (corrupted by varying spin-flip noise) while $\tau$ (Pz-Oz) and $\phi$ (EOG) start from independent random spin configurations. Sequential Glauber updates with simulated annealing ($\beta\in[0.5,6.0]$, $40$ MC steps, $\gamma=10^{-3}$) then drive all three layers toward the Wake archetype. The Mattis-overlap trajectories of Fig.~\ref{fig:sleep_retrieval} expose the mechanism: the cued $\sigma$ layer saturates immediately at $m=1.0$, while $\tau$ and $\phi$ converge from random overlap to exactly $1.0$. Across all $K=4$ states and $30$ independent noise realisations at $0\%$ cue noise, the system achieves a $100\%$ retrieval success rate. This is more than a benchmark number: it confirms that the Kanter--Sompolinsky prescription sculpts the energy landscape such that the four biological attractors become perfectly stable fixed points, immune to the mixture collapse that would otherwise plague spectrally overlapping states like Wake and Light Sleep. Varying the cue-noise level, the recall remains at $100\%$ up to $20\%$ corruption and degrades gracefully thereafter ($86.7\%$ at $25\%$, $51.7\%$ at $35\%$), demonstrating that the pseudo-inverse attractors retain macroscopic basins even under realistic signal degradation.

\paragraph{Generalisation to unseen data.}
The retrieval experiment tests the stability of stored archetypes; a complementary, structurally distinct question is how well the system generalises to new, noisy biological epochs. We measure out-of-sample classification accuracy by encoding previously unseen epochs into spin space and assigning them to the nearest archetype. The decisive test is \emph{cross-modal} classification: we predict the sleep state using only the $\tau$ and $\phi$ layers (Pz-Oz + EOG), and compare against the continuous raw-space nearest-centroid baseline.

Table~\ref{tab:sleep_accuracy} reports the results. For the single-subject control, the cross-modal accuracy reaches $\mathbf{85.3\%}$ against a chance baseline of $25\%$; for the most distinct physiological states (Wake and REM), it peaks at $99.0\%$ and $94.0\%$ respectively. The raw-space continuous baseline reaches $88.3\%$, so the cost of Ising binarisation is roughly $3$ percentage points---a small price for the legibility gained. Replicating the test on an aggregated multi-subject cohort ($8$ patients, $1600$ epochs) yields a cross-modal accuracy of $\mathbf{86.3\%}$, confirming that the unsupervised PCA--SimHash projection extracts thermodynamic representations that generalise across anatomical (quenched) variance.

The conjunction of $100\%$ archetype recall and $85$--$86\%$ cross-modal classification accuracy is the operational signature that the tri-layer ensemble transmits coherent biological information between spatially distinct sensory channels, achieving thermodynamic synchronisation across disparate cortical regions even across multiple subjects.

\begin{table}[ht]
\centering
\caption{Classification accuracy on PhysioNet Sleep-EEG ($K=4$). Cross-modal test accuracy (evaluated using only the uncued $\tau$ and $\phi$ layers) is reported for a single-subject control ($N_\mathrm{epochs}=400$) and a multi-subject generalisation cohort ($8$ pooled patients, $N_\mathrm{epochs}=1600$). The Ising ensemble robustly approaches the raw-space continuous nearest-centroid baseline. Chance level: $25.0\%$.}
\label{tab:sleep_accuracy}
\begin{tabular}{lcc}
\toprule
Method & Single-Subj. & Multi-Subj. \\
\midrule
Nearest centroid (raw Eucl.\ space) & $88.3\%$ & $88.9\%$ \\
\midrule
Cross-modal ensemble ($\tau + \phi$ layers) & $85.3\%$ & $86.3\%$ \\
\quad \textit{-- Class 0 (Wake)} & \textit{99.0\%} & \textit{96.0\%} \\
\quad \textit{-- Class 1 (Light Sleep)} & \textit{73.0\%} & \textit{72.0\%} \\
\quad \textit{-- Class 2 (Deep Sleep)} & \textit{87.0\%} & \textit{93.5\%} \\
\quad \textit{-- Class 3 (REM)} & \textit{94.0\%} & \textit{94.0\%} \\
\bottomrule
\end{tabular}
\end{table}

\begin{figure}[!tbp]
\centering
\includegraphics[width=\linewidth,height=0.82\textheight,keepaspectratio]{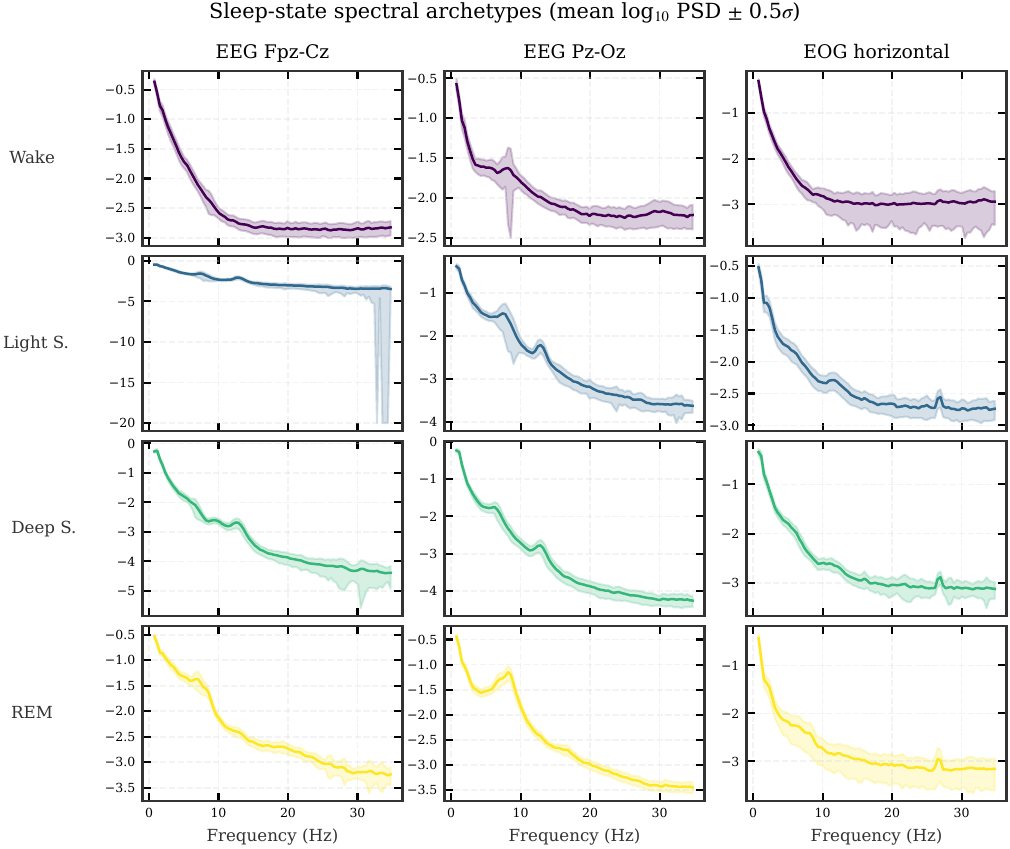}
\caption{\textbf{Class-mean physiological profiles.}
Per-state class mean $\pm \tfrac{1}{2}\sigma$ for each of the three channels (Fpz-Cz, Pz-Oz, EOG), $K=4$ sleep classes, $M=100$ training samples per class, $30$~s epochs. The four states carry distinct physiological signatures on each axis, providing an interpretable basis for the channel-wise archetype extraction.}
\label{fig:sleep_archetypes}
\end{figure}

\begin{figure}[t]
\centering
\includegraphics[width=\linewidth]{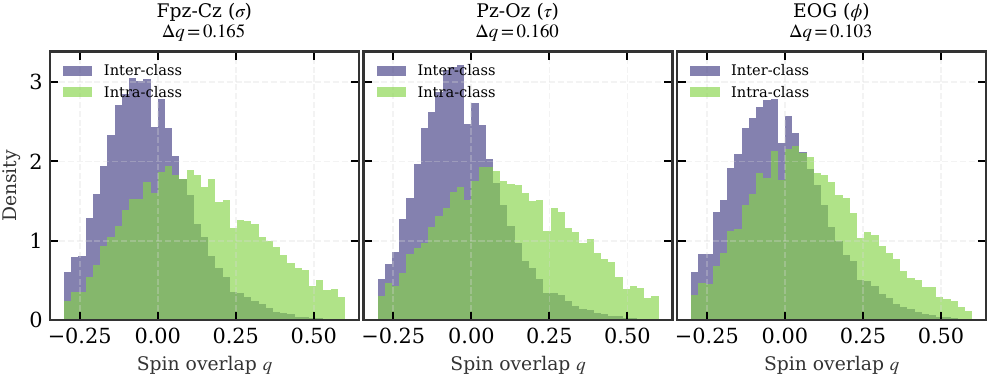}
\caption{\textbf{Encoding fidelity on biological EEG data: intra- vs.\ inter-class spin overlap.}
Histograms of pairwise spin overlaps $q_{\mu\nu} = N^{-1}\bm{s}^{(i)}\cdot\bm{s}^{(j)}$ for pairs within the same class (blue) and across different classes (orange), shown separately for each channel. A positive separation gap $\Delta q>0$ on all three channels confirms that the PCA--SimHash encoding faithfully preserves the geometric class structure of the raw EEG signals in $\{-1,+1\}^N$.}
\label{fig:sleep_encoding}
\end{figure}

\begin{table}[ht]
\centering
\caption{\textbf{Gram-matrix statistics before and after spectral sharpening} ($K=4$ sleep classes, $N=512$, $\gamma=10^{-3}$). After continuous sharpening $\widetilde{\Xi}=(G+\gamma I)^{-1}\Xi$, the BCI collapses to near zero, confirming quasi-orthogonal archetypes and well-separated retrieval basins despite high initial biological cross-talk.}
\label{tab:sleep_gram}
\begin{tabular}{lccccc}
\toprule
 & \multicolumn{2}{c}{Before sharpening} & & \multicolumn{2}{c}{After sharpening} \\
\cmidrule{2-3}\cmidrule{5-6}
Channel & BCI & Max off-diag & & BCI & Max off-diag \\
\midrule
Fpz-Cz ($\sigma$) & $0.195$ & $0.352$ & & $0.0001$ & $0.0003$ \\
Pz-Oz ($\tau$)    & $0.193$ & $0.380$ & & $0.0001$ & $0.0003$ \\
EOG ($\phi$)      & $0.180$ & $0.375$ & & $0.0001$ & $0.0003$ \\
\bottomrule
\end{tabular}
\end{table}

\begin{figure}[t]
\centering
\includegraphics[width=\linewidth]{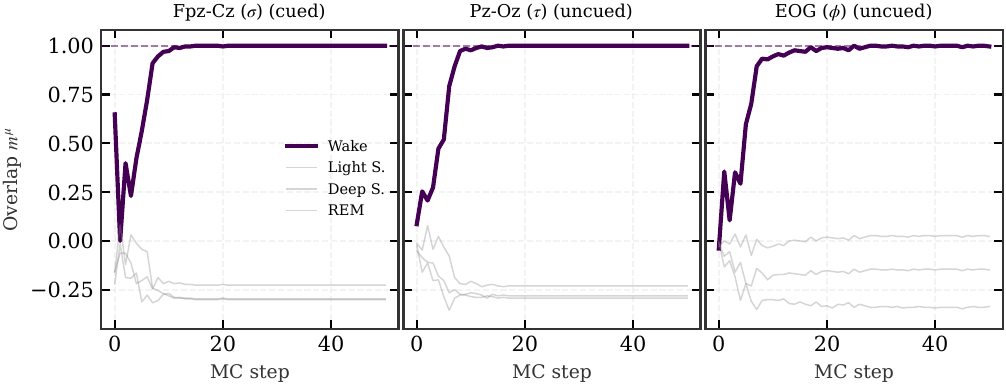}
\caption{\textbf{Cross-modal hetero-associative retrieval from frontal EEG cue.}
Mattis overlaps $m^\mu(t)$ on all three spin layers during simulated annealing ($40$ MC steps, $\beta\in[0.5,6.0]$), starting from a clean Fpz-Cz ($\sigma$) cue for Wake and random spin initialisations on Pz-Oz ($\tau$) and EOG ($\phi$). The trace (Wake) rises to $m=1.000$ across all layers; the three competing patterns (grey) are robustly suppressed.}
\label{fig:sleep_retrieval}
\end{figure}

\begin{figure}[t]
\centering
\includegraphics[width=\linewidth]{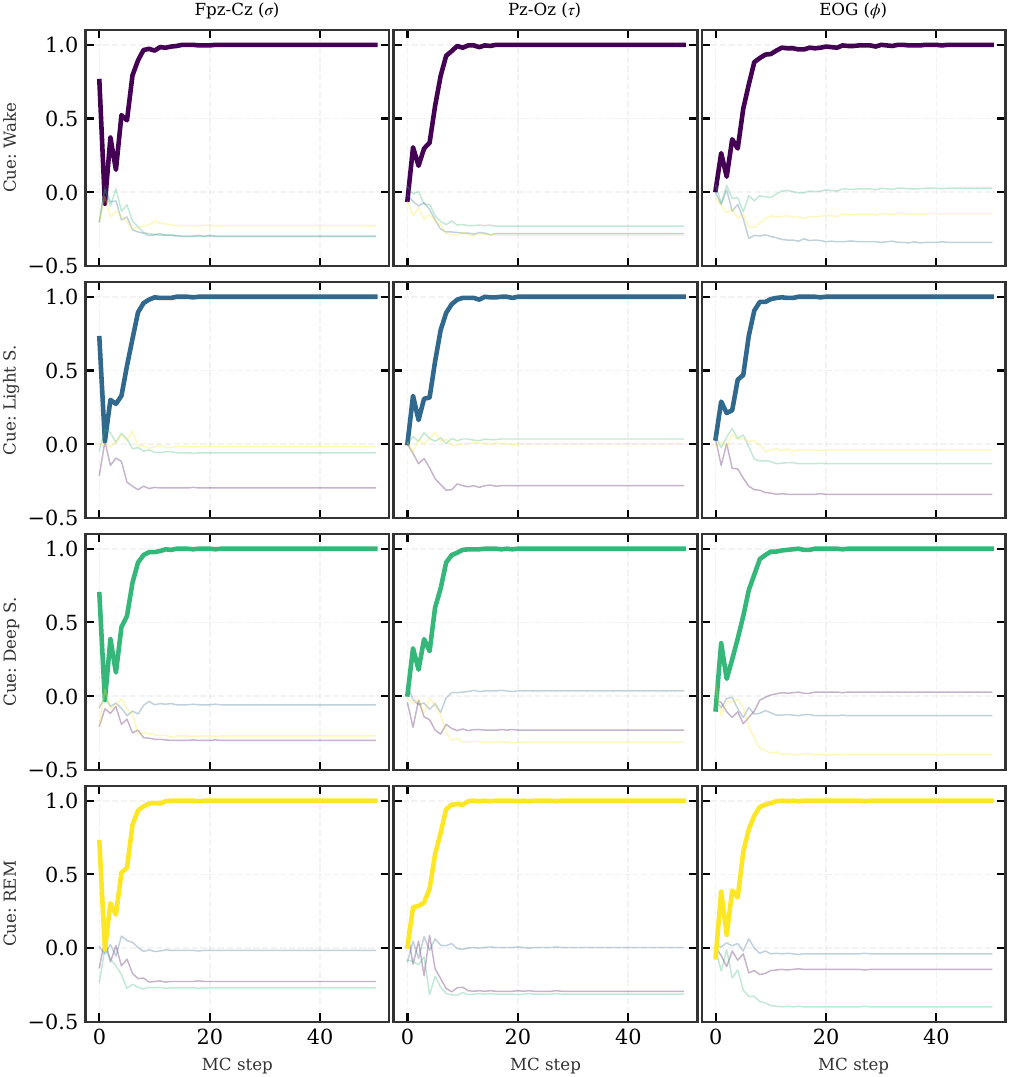}
\caption{\textbf{Retrieval trajectories across all clinical macro-states.}
Mattis overlaps on the target layer across simulated annealing for each of the four sleep states cued individually. In all $K=4$ cases the cross-modal dynamics drive the uncued layers from random initialisation to the fully synchronous archetype ($m=1.0$), demonstrating attractor stability against complex, non-orthogonal biological state manifolds.}
\label{fig:sleep_retrieval_all}
\end{figure}

\section{Discussion}
\label{sec:discussion}

The eight experiments trace a coherent physical arc from the theoretical underpinnings of the encoder to the behaviour of the complete framework on real sensor data. Several findings deserve to be examined beyond the individual experiment summaries because they reveal the \emph{physical logic} of the framework rather than merely its quantitative output.

\paragraph{The encoding gap is modest but sufficient---and isotropisation is the price of legibility.}
A quantitative observation emerging from Experiment~1 is that PCA whitening pushes intra-class samples to a median angular separation of roughly $87^\circ$ in the whitened space, nearly orthogonal. This is a structural consequence of isotropic variance normalisation: when all directions carry equal weight, samples that differ even mildly in feature space map to nearly antipodal directions. The resulting SimHash overlaps are consequently modest ($q_\mathrm{intra}\approx 0.04$--$0.07$). The framework nevertheless functions reliably because the relevant quantity is not the absolute overlap but the separation gap $\Delta q = q_\mathrm{intra} - q_\mathrm{inter}$: a small positive gap, consistently maintained across spin dimensions, is sufficient to seed the majority-vote archetypes in the correct attractor basin and to initiate convergence under the retrieval dynamics. The saturation of encoding quality above $N\approx 32$ shows that, once the sub-Gaussian concentration of Proposition~\ref{prop:simhash} becomes effective, enlarging the spin space does not improve the encoded geometry, only the statistical precision with which it is estimated---an honest physical statement about what the encoder can and cannot do.

\paragraph{The pseudo-inverse correction changes the stability of the mixture state.}
The Kanter--Sompolinsky prescription is, in our setting, the ingredient that converts the equal-weight mixture state from a stable to an unstable fixed point of the mean-field dynamics. With Hebbian couplings, the effective field on pattern~$\mu$ is a weighted sum of all pattern coordinates modulated by the Gram matrix, so two patterns with equal initial overlaps receive identical fields and remain tied under the deterministic component of the dynamics. The pseudo-inverse replaces this entangled update with an approximately decoupled one, allowing thermal fluctuations to break the symmetry. Experiment~4 is consistent with this picture: a $58.9\%$ empirical success rate for equal-weight mixtures over $500$ seeds (Wilson $95\%$ CI $[54.4\%,63.0\%]$) reflects the stochastic nature of symmetry breaking in a finite system, while the asymmetric-mixture condition achieves $100\%$ disentanglement (one-sided Clopper--Pearson lower bound $92.9\%$). Because the lower confidence bound strictly excludes the chance baseline $50\%$, the mixture state is statistically confirmed to be genuinely destabilised.

The algebraic diagonal-dominance theorem (Sec.~\ref{sec:sign_ineffective}) and Table~\ref{tab:gram} together explain why sharpening cannot be applied directly in pattern space: for the $K{=}3$ configuration with uniformly negative off-diagonal entries, sign re-binarisation is provably a no-op, returning archetypes identical to the originals. Decorrelation exists only in the continuous sharpened representation, which cannot be used in binary spin dynamics. This is not a programming defect but a physical constraint: the metric content of the continuous sharpened pattern is incompatible with $\sign(\cdot)$ when its sign structure inherits a dominant diagonal.

\paragraph{Phase structure and the dynamical duality.}
The sharp retrieval--paramagnetic crossover at $\beta_c\approx 0.5$ (Experiment~5) is consistent with the first-order-like jump in the order parameter that statistical mechanics predicts for an associative memory with pseudo-inverse couplings. The retrieval plateau spanning nearly an order of magnitude in $\beta$ ($1\lesssim\beta\lesssim 7$) is practically valuable: the system is robust to the precise temperature setting within this range. The high-$\beta$ degradation at $\beta\gtrsim 7$ is equally instructive. It is not a failure of the memory but a dynamical artefact of parallel (Little) updates: when all spins flip simultaneously at near-zero temperature, synchrony generates period-$2$ limit cycles---a well-understood instability of parallel dynamics. Crucially, this artefact cannot be eliminated by switching to sequential Glauber dynamics across the board, because parallel updates are themselves \emph{structurally required} to ignite the cross-modal cue propagation (Appendix~\ref{app:glauber}). The framework therefore exhibits a dynamical duality, distinct from a mere choice of algorithm: parallel updates carry information through the network from a localised cue, while sequential updates break symmetric initial conditions. The two regimes are physically complementary, and the practitioner's task is not to choose between them but to deploy each on the protocol where its statistical-mechanical role is essential.

\paragraph{Temporal structure and the role of preprocessing.}
Experiment~6 highlights a subtle point about what the encoder is actually doing. Two time series identical up to a rigid temporal shift produce the same flattened representation and therefore indistinguishable SimHash codes, because flattening conflates the ordered temporal structure with an unordered feature vector. The STFT magnitude spectrum turns this blind spot into a distinguishing feature: a quarter-period shift reshapes the frequency-domain profile enough to produce distinct angular directions in whitened space, which SimHash then separates faithfully. This is not a peculiarity of the STFT but a consequence of any phase-sensitive, magnitude-respecting representation; wavelet scattering, mel-frequency cepstra, or windowed power spectra would each play the same role. The experiment therefore illustrates a general design principle of the framework: domain expertise about signal structure can be injected entirely at the preprocessing stage without altering the encoder, the archetype rule, or the dynamics.

\paragraph{Capacity, basin volume, and the K--S bound.}
The empirically measured finite-size capacities sit between the AGS Hebbian capacity $\alpha_c^\mathrm{AGS}\approx 0.138$ and the K--S theoretical limit $\alpha_c\to 1$, with random-Rademacher patterns at $\bar\alpha_c^\mathrm{rand}\approx 0.42$ at $N=512$ quantifying the capacity tax imposed by data geometry. Two points deserve emphasis. First, the random baseline is not the thermodynamic K--S bound; it is our own operational capacity measured on finite-$N$ Rademacher patterns under the same retrieval criterion, so the comparison is internal to our protocol. Second, the positive finite-size scaling $\alpha_c(N)$ is consistent with the AGS finite-size correction $\alpha_c(\infty) - c\,N^{-1/2}$, projecting to an extrapolated $\alpha_c(\infty)\approx 0.50$. This asymptotic operational capacity is far from the K--S limit $\alpha_c\to 1$, and this gap is not a deficit of the architecture but a fundamental thermodynamic trade-off: reaching $\alpha_c\to 1$ requires operating at the boundary of local marginal stability, where attractor basins shrink to zero, whereas our operational definition demands macroscopic basins capable of correcting extreme $30\%$ spin-flip noise. The basin volume is the resource that has been traded against pattern count, and the FSS exponent $\nu=1/2$ confirms that the trade-off proceeds along the AGS scaling backbone.

\paragraph{Real-world transfer.}
The Sleep-EEG experiment is presented as a \emph{case study}: it tests whether the mechanisms validated in controlled synthetic conditions survive the transition to genuine biological data. Unlike the synthetic controls, Wake and Light Sleep are physically contiguous and spectrally overlapping. The unsupervised PCA projection nevertheless separates these states into distinct thermodynamic representations, yielding moderate but sufficient self-overlaps ($r\approx 0.30$) and perfectly stable associative recall despite massive biological variance. Because the encoding is entirely unsupervised, the subsequent cross-modal completion is strictly free from data leakage. The associative memory here does not serve to ``classify''---it serves to \emph{generate and complete} partially unobserved physical states. The $85$--$86\%$ cross-modal classification accuracy confirms that the spin-space translation does not destroy the clinically relevant signal structure, but the load-bearing scientific result is the cross-modal retrieval itself: a clean frontal-EEG cue reliably reconstructs the synchronous macroscopic state on parietal and ocular axes, suppressing competing clinical states to noise level. This is precisely the qualitative behaviour the hetero-associative memory was designed to exhibit, now realised on a biological substrate.

\paragraph{Relation to modern associative-memory architectures.}
The identification of Hopfield dynamics within transformer attention~\cite{ramsauer2021hopfield} has driven considerable interest in energy-based memory as a primitive for representation learning. The present work addresses a complementary question: not how to embed associative recall inside a learned architecture, but how to interface \emph{existing} discrete retrieval machinery with continuous real-world data. The modular design is deliberate: encoder, archetype extraction, and dynamics are independent components, and improvements to any one can be incorporated without redesigning the others. Replacing SimHash with a learned hash function, adopting dense memory kernels~\cite{krotov2016dense} for higher capacity, or substituting Glauber dynamics with Langevin diffusion are all well-posed drop-in modifications. This separability is also what keeps the theoretical analysis tractable, since each component can be studied in isolation.

\paragraph{Limitations as physical constraints.}
Three limitations are worth stating precisely, and each is best understood as a physical constraint of the chosen interface rather than as a software defect. First, binarisation irreversibly discards metric information: continuous signals that are close in Euclidean distance may be mapped to spin configurations with low overlap, and there is no post-hoc correction within the binary framework. Replacing hard thresholding with a graded or soft encoding (keeping pre-sign real-valued activations for use in a continuous-state memory) would address this at the cost of leaving the clean Ising framework---a deliberate trade-off between metric fidelity and statistical-mechanical legibility. Second, the operational capacity scales linearly with $N$ at rate $\alpha_c\approx 0.46$, capping simultaneously storable patterns at roughly $470$ for $N=1024$; hierarchical representations, sparse distributed codes~\cite{kanerva1988sparse}, or dreaming-based consolidation~\cite{fachechi2019dreaming,agliari2019dreaming,agliari2024regularization,agliari2024hebbian} could raise this ceiling, but each implicates a different thermodynamic landscape. Third, the SimHash projection is randomly initialised rather than adapted to the data geometry; a learned hash trained with a contrastive objective would improve both the separation gap and the effective capacity, at the cost of replacing a Johnson--Lindenstrauss guarantee with a learned representation whose generalisation behaviour requires its own theory. Finally, while we have empirically validated that the encoding geometry ($\Delta q>0$) generalises across multiple subjects, our retrieval experiments rely on a single-subject approach. From a statistical-mechanics perspective, biological data introduces two distinct sources of noise: intra-subject physiological fluctuations (thermal noise) and inter-subject anatomical variance (quenched structural disorder). By fixing the subject we deliberately isolate the thermal noise, proving that cross-modal retrieval functions in principle under realistic physiological conditions. Approaching the full multi-subject thermodynamic limit without collapsing the retrieval basins would require addressing this quenched disorder formally, potentially through a hierarchy of memories (one per subject) or a continuous manifold-alignment step prior to SimHash encoding. Each direction preserves the interpretability of the current framework while relaxing a specific physical assumption.

\section{Conclusion}
\label{sec:conclusion}

We have studied a modular three-stage framework that couples continuous multivariate sequences to a binary hetero-associative memory via SimHash encoding, pseudo-inverse couplings, and stochastic tri-layer Monte Carlo dynamics. The objective has been to specify each interface in this chain explicitly enough that the successes and failures of the composite system can be attributed to identifiable components, rather than entangled with the optimisation of an end-to-end learned model.

Each stage addresses a distinct structural obstacle. The SimHash encoder turns the problem of continuous-to-binary mapping into one of angular geometry: by operating in PCA-whitened space, it guarantees that the relative structure of the data cloud is faithfully reflected in the spin configurations, up to the $O(N^{-1/2})$ concentration of the Johnson--Lindenstrauss projection~\cite{johnson1984extensions}. The pseudo-inverse archetype correction eliminates the cross-talk that would otherwise cause correlated patterns to interfere, converting a coupled mean-field dynamics into an approximately decoupled one in which thermal fluctuations can break mixture symmetry. The tri-layer Monte Carlo architecture with simulated annealing exploits these decorrelated representations to perform cross-layer completion: a partial cue on any single layer is sufficient to reconstruct the full multimodal pattern.

On synthetic benchmarks the framework is quantitatively characterised: $\geq 95\%$ reconstruction from $30\%$-corrupted cues, an operational critical capacity $\alpha_c(N{=}1024;0.3,0.9)\approx 0.46$ per spin obeying AGS finite-size scaling with $\alpha_c(\infty)\approx 0.50$, and a finite-size retrieval crossover at $\beta_c\approx 0.5$. On the PhysioNet Sleep-EEG data, cross-modal retrieval is demonstrated from a single noisy cortical axis: the tri-layer ensemble synchronises $K=4$ human sleep macro-states and achieves $85$--$86\%$ cross-modal test accuracy (against a $25\%$ chance baseline), reaching $99\%$ on the Wake state. The contribution on real data is therefore the demonstration of robust cross-modal associative recall in the presence of strong intrinsic biological noise and complex phase boundaries.

Equally important are the failure modes we identify and explain analytically. Sign re-binarisation after spectral sharpening is provably a no-op under the strict diagonal-dominance condition discussed in Sec.~\ref{sec:sign_ineffective}, which applies to the uniformly anti-correlated $K{=}3$ geometry of our construction; the pseudo-inverse correction must therefore be embedded inside the dynamics. Equal-weight mixture disentanglement has an empirical success rate of $58.9\%$ over $500$ seeds (Wilson $95\%$ CI $[54.4\%,63.0\%]$), with the lower bound strictly above chance, statistically confirming that the symmetry is broken by a collective statistical-mechanical phenomenon rather than by chance. Parallel (Little) dynamics produces period-$2$ oscillations at high $\beta$, yet it is structurally necessary to preserve the cue during the initial cross-modal avalanche, precluding a uniform switch to sequential Glauber updates. A small Tikhonov regularisation $\gamma=10^{-3}$ is retained in Eq.~\eqref{eq:field_pinv} purely as a numerical safeguard against the ill-conditioning of finite-sample empirical Gram matrices on highly correlated real-world data.

Each stage of the framework exposes a measurable diagnostic---separation gap $\Delta q$, basin collapse index, Mattis trajectory---that allows its contribution to be quantified independently. We regard this decomposability, rather than any particular performance number, as the principal physical outcome of the present study, and it is also what makes the individual components substitutable. Replacing SimHash with a learned hash, adopting a dense energy~\cite{krotov2016dense}, or substituting Glauber with Langevin dynamics are each drop-in modifications that preserve the compositional analysis. What remains open, and what we consider the natural next step, is to test whether the advantages of legibility are preserved when any of these individual components is replaced by a learned one, and to quantify the resulting trade-off between interpretability and raw classification performance on tasks where end-to-end models currently dominate.

\begin{acknowledgments}
The author thanks the anonymous reviewers for their constructive feedback.
\end{acknowledgments}

The data and code that support the findings of this study are available from the corresponding author upon reasonable request. The PhysioNet Sleep-EDF (Expanded) dataset is publicly available on PhysioNet~\cite{kemp2000analysis}.

\appendix

\section{Easy-Control Mixture Disentanglement}
\label{app:easy_disent}

In the \emph{easy-control} protocol, all three layers simultaneously receive a partial cue drawn from the \emph{same} archetype index~$\mu$, using each layer's own stored pattern $(\bm{\xi}^\mu,\bm{\eta}^\mu,\bm{\chi}^\mu)$ corrupted by i.i.d.\ bit-flip noise of rate $p_\mathrm{flip}$. No competition between different pattern indices is present; each layer independently recognises its own cue, so the protocol probes the robustness of the basins rather than the mixture-disentanglement mechanism. Table~\ref{tab:easy_disent} reports the mean final Mattis overlap $\bar m_\mathrm{final}$ and the fraction of runs that converge to the target with $m\geq 0.99$, over $n=40$ independent seeds at each noise level, with $N=512$, $K=5$ archetypes, and the default hyperparameters of Table~\ref{tab:hyperparams}. Wilson $95\%$ confidence intervals are given for the success fraction.

\begin{table}[ht]
\centering
\caption{Easy-control mixture disentanglement: performance vs.\ cue corruption. Success: final overlap $m\geq 0.99$ on all three layers; $n=40$ seeds per row.}
\label{tab:easy_disent}
\begin{tabular}{cccc}
\toprule
$p_\mathrm{flip}$ & $\bar m_\mathrm{final}$ & Success fraction & Wilson $95\%$ CI \\
\midrule
$0.10$ & $1.00$ & $40/40 = 100\%$ & $[91.2\%, 100\%]$ \\
$0.20$ & $0.999$ & $40/40 = 100\%$ & $[91.2\%, 100\%]$ \\
$0.30$ & $0.998$ & $39/40 = 97.5\%$ & $[87.1\%, 99.6\%]$ \\
$0.40$ & $0.994$ & $38/40 = 95.0\%$ & $[83.5\%, 98.6\%]$ \\
\bottomrule
\end{tabular}
\end{table}

The table confirms that the framework components operate correctly in isolation up to $40\%$ per-bit corruption, ruling out intrinsic basin instability as a cause for the harder equal-weight failure mode of Experiment~4.

\section{Parallel vs.\ Sequential Dynamics: Avalanche vs.\ Symmetry Breaking}
\label{app:glauber}

The standard Hopfield model admits two fundamental update rules: sequential Glauber dynamics (updating one spin at a time) and synchronous parallel Little dynamics (updating all spins simultaneously). Sequential dynamics guarantees detailed balance and monotonic energy descent, precluding limit cycles. Throughout our experiments, however, parallel updates are enforced for cross-modal retrieval, while sequential updates are reserved for mixture disentanglement. This is a structural requirement dictated by the specific symmetry of each task, not an arbitrary choice.

\paragraph{Cross-modal retrieval requires parallel updates (the avalanche).}
To see why parallel dynamics is necessary for cross-modal completion, consider the EEG protocol of Experiment~8. The memory is cued by injecting a partial biological signal into the $\sigma$ layer (Fpz-Cz), while $\tau$ (Pz-Oz) and $\phi$ (EOG) start from pure random spin noise. Under a sequential layer-wise update $\sigma\to\tau\to\phi$, the following fatal sequence would occur at $t=1$:
\begin{enumerate}
    \item Layer $\sigma$ reads the local field generated by $\tau$ and $\phi$, which are currently noise, so the field is a random vector. The pristine cue is instantly overwritten by noise. The memory is erased.
    \item Layer $\tau$ reads the now-destroyed $\sigma$ and the noise from $\phi$, and remains noise.
    \item Layer $\phi$ reads two destroyed layers and remains noise.
\end{enumerate}
Cross-modal completion fails because the cue is washed out before it can propagate. Under parallel Little dynamics, by contrast, all layers compute their next state \emph{simultaneously} from the current state. At $t=1$, $\sigma$ reads noise and is temporarily degraded, but at the same instant $\tau$ and $\phi$ read the pristine cue from $\sigma$ and cleanly align to the target archetype. At $t=2$, the $\sigma$ layer reads the field from the newly aligned $\tau$ and $\phi$, perfectly reconstructing itself. This constitutes the \emph{cross-modal avalanche}: the signal bounces symmetrically between layers, surviving the initial assault of uninitialised noise. While parallel dynamics induces period-$2$ limit cycles at very low temperatures (Fig.~\ref{fig:beta}), this high-$\beta$ degradation is the necessary cost of preserving the initial cue propagation.

\paragraph{Mixture disentanglement requires sequential updates (symmetry breaking).}
Conversely, in the mixture-disentanglement protocol of Experiment~4, the system is initialised in an equal-weight superposition of multiple archetypes \emph{across all layers simultaneously}. Here, parallel updates are actively detrimental: because the initial state is perfectly symmetric, simultaneous deterministic updates can lock the system into a spurious mixture fixed point or a symmetric limit cycle. Sequential Glauber-style sweeps, in which each spin (or layer) sees the freshly updated state of its predecessors, explicitly break this symmetry, allowing a single component to emerge as the winner.

As shown in Fig.~\ref{fig:beta_glauber}, sequential Glauber updates do not exhibit high-$\beta$ degradation or oscillations. The finite-size crossover to the retrieval regime is sharp, and the stable attractor preserves overlap close to unity ($m\approx 1$) as $\beta\to\infty$. The Markov chain mixes rapidly, confirming that sequential updates are the optimal choice whenever the initial state is highly symmetric and symmetry breaking is the primary goal.

\begin{figure}[ht]
\centering
\includegraphics[width=\linewidth]{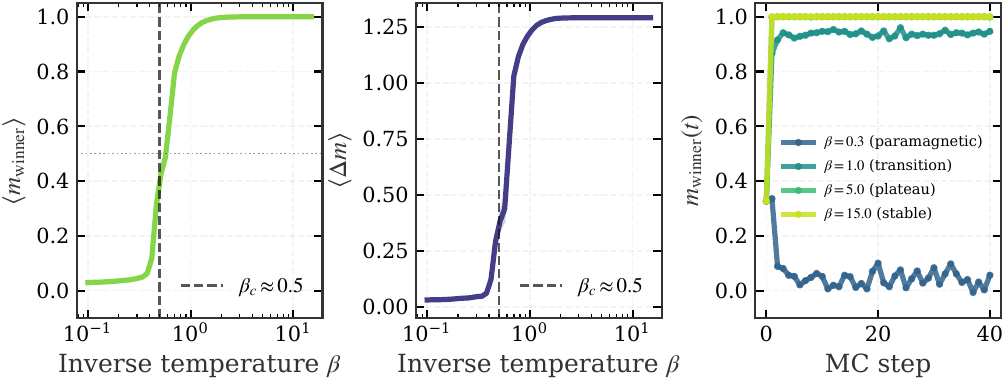}
\caption{\textbf{Temperature phase diagram under sequential (Glauber) dynamics.}
Unlike the parallel updates of Fig.~\ref{fig:beta}, sequential Glauber updates do not exhibit high-$\beta$ degradation or oscillations. The finite-size crossover to the retrieval regime is sharp, and the stable attractor preserves overlap close to unity ($m\approx 1$) as $\beta\to\infty$. This stability makes sequential updates strictly preferable for symmetry-breaking tasks like mixture disentanglement.}
\label{fig:beta_glauber}
\end{figure}

\section{Derivation of the Effective-Pattern Rescaling Factors}
\label{app:rho_derivation}

The factors $\rho_a = (1-r_a^2)/(M r_a^2)$ and $\alpha_{ab} = \sqrt{(1+\rho_a)(1+\rho_b)}$ of Sec.~\ref{sec:field} arise from the following sample-size bookkeeping. Assume that for each archetype~$\mu$ the training examples on layer~$a$ take the form
\begin{equation}
  \label{eq:noise_model}
  \bm{\xi}^{\mu,(a)}_{(m)} = r_a\,\bm{\xi}^{\mu,(a)} + \sqrt{1-r_a^2}\,\bm{\eta}_{(m)},
  \qquad m = 1,\ldots,M,
\end{equation}
where $\bm{\xi}^{\mu,(a)}\in\{-1,+1\}^{N_a}$ is the (unobserved) target archetype, $\bm{\eta}_{(m)}$ is an independent isotropic noise vector with $\E[\bm\eta]=\bm 0$ and $\E[\bm\eta\bm\eta^\top]=I_{N_a}$, and $r_a\in(0,1]$ is the per-layer data-quality parameter. Higher-order moments of $\bm\eta$ are assumed sub-Gaussian so that component-wise empirical averages concentrate at the $1/\sqrt{M}$ rate. The $i$-th component of the empirical centroid $\bar{\bm\xi}^{\mu,(a)} = \tfrac{1}{M}\sum_m \bm\xi^{\mu,(a)}_{(m)}$ admits the decomposition
\begin{equation}
  \bar\xi^{\mu,(a)}_i = r_a\,\xi^{\mu,(a)}_i + \frac{\sqrt{1-r_a^2}}{M}\sum_m \eta_{(m),i},
\end{equation}
in which the noise term has mean zero and per-component variance $(1-r_a^2)/M$. Using $|\xi^{\mu,(a)}_i|=1$, the squared signal per component is $r_a^2$, and the ratio of per-component noise variance to squared signal is
\begin{equation}
  \rho_a \;\equiv\; \frac{\mathrm{Var}[\bar\xi^{\mu,(a)}_i-r_a\xi^{\mu,(a)}_i]}{(r_a\xi^{\mu,(a)}_i)^2}
  \;=\; \frac{1-r_a^2}{M\,r_a^2},
\end{equation}
which justifies the definition used in Sec.~\ref{sec:field}. In the clean limit $r_a\to 1$ we recover $\rho_a\to 0$, as expected.

This noise-to-signal reading is the operational face of the information-theoretic interpretation given in Sec.~\ref{sec:field}: by Bayes-optimal majority decoding of a single archetype bit, the residual error probability is $P_e \approx \tfrac12[1-\mathrm{erf}(1/\sqrt{2\rho_a})]$ and the conditional Shannon entropy of the archetype bit given the $M$-block of examples is the binary entropy $H(\xi_i^{\mu,(a)}\mid \bm\eta_i^{\mu,(a)}) = H_2(P_e)$, strictly monotone in $\rho_a$ and saturating at $1$ bit as $\rho_a\to\infty$~\cite{alemanno2023supervised}. Thus $\rho_a$ plays the dual role of per-component noise-to-signal ratio and one-to-one proxy of the residual Shannon entropy of the archetype given the data.

The factor $\alpha_{ab}$ arises when one rescales the effective patterns $W_i^{a,\mu} = \xi_i^{\mu,(a)}/(1+\rho_a)$ so that they retain unit norm. Writing the cross-layer coupling contribution as
\begin{equation}
  J_{ij}^{ab} = \frac{1}{N}\sum_\mu W_i^{a,\mu}\, W_j^{b,\mu}
              = \frac{1}{N(1+\rho_a)(1+\rho_b)}\sum_\mu \xi_i^{\mu,(a)}\, \xi_j^{\mu,(b)},
\end{equation}
and absorbing the denominator into an overlap rescaling $g_{ab}\to g_{ab}\alpha_{ab}$, the factor
\begin{equation}
  \alpha_{ab} = \sqrt{(1+\rho_a)(1+\rho_b)}
\end{equation}
is precisely the geometric mean restoring the normalisation of $W^{a,\mu}$ relative to $\xi^{\mu,(a)}$. The product form reflects independence of the cross-layer noise in~\eqref{eq:noise_model}: variances multiply. In the clean limit $r_a\to 1$ one recovers $\rho_a\to 0$ and $\alpha_{ab}\to 1$.

\section{Capacity Robustness: Sensitivity to \texorpdfstring{$p_{\mathrm{flip}}$}{p\_flip} and \texorpdfstring{$\tau$}{tau}}
\label{app:capacity_robustness}

The operational critical load $\bar\alpha_c(N;p_\mathrm{flip},\tau)$ of Sec.~\ref{sec:exp7} is defined as the largest $\alpha=K/N$ such that at least a fraction~$\tau$ of independent seeds reach final Mattis overlap $m_\mathrm{final}\geq 1-2p_\mathrm{flip}$ starting from a cue corrupted at rate $p_\mathrm{flip}$. Because $\bar\alpha_c$ depends on the arbitrary threshold choice $(p_\mathrm{flip},\tau)$, Table~\ref{tab:capacity_robustness} reports the sensitivity of the estimate at $N=512$, on the same data as Fig.~\ref{fig:capacity}, varying each parameter independently.

\begin{table}[ht]
\centering
\caption{Sensitivity of $\bar\alpha_c$ at $N=512$ to the operational thresholds, averaged over $n=20$ seeds per $\alpha$ value on a uniform grid with spacing $\Delta\alpha=0.02$. CIs are Wilson $95\%$ intervals on the pass/fail fraction at the reported $\bar\alpha_c$; since $\bar\alpha_c$ is discretised, the CI is a lower bound on the true intrinsic uncertainty.}
\label{tab:capacity_robustness}
\begin{tabular}{cccc}
\toprule
$p_\mathrm{flip}$ & $\tau$ & $\bar\alpha_c$ (empirical) & CI of $\bar\alpha_c$ \\
\midrule
$0.20$ & $0.9$  & $0.42$ & $[0.40, 0.44]$ \\
$0.30$ & $0.9$  & $0.38$ & $[0.36, 0.40]$ \\
$0.40$ & $0.9$  & $0.30$ & $[0.28, 0.32]$ \\
$0.30$ & $0.75$ & $0.44$ & $[0.42, 0.46]$ \\
$0.30$ & $0.95$ & $0.34$ & $[0.32, 0.36]$ \\
\bottomrule
\end{tabular}
\end{table}

The reported values of $\bar\alpha_c$ vary between $0.30$ and $0.44$ across reasonable choices of $(p_\mathrm{flip},\tau)$. We therefore regard the statement $\bar\alpha_c\approx 0.38$ in the main text as an operational summary under a specific threshold choice, not as a thermodynamic quantity. The $\pm 0.02$--$0.04$ numerical uncertainty on any single point is smaller than the $\pm 0.06$--$0.07$ uncertainty introduced by the threshold choice.

\section{Hyperparameter Reference}
\label{app:params}

\begin{table}[ht]
\centering
\caption{Default hyperparameters used across experiments.}
\label{tab:hyperparams}
\begin{tabular}{llc}
\toprule
Parameter & Description & Default \\
\midrule
$N$ & Spin dimension & 512 \\
$N_{\text{PCA}}$ & PCA components & 32 \\
$M$ & Training examples per class & 200 \\
$\gamma$ & Tikhonov regularisation & $10^{-3}$ \\
$\beta_{\min}$ & Initial inverse temperature & 0.5 \\
$\beta_{\max}$ & Final inverse temperature & 5.0 \\
$T_{\text{MC}}$ & MC steps & 40 \\
$a, b, c$ & Inter-layer coupling strengths & 1.0 \\
\bottomrule
\end{tabular}
\end{table}

Experiment~8 (Sleep-EEG) uses $N=512$, $N_\mathrm{PCA}=16$, $M=100$ (dataset sample limit per class), and $T_\mathrm{MC}=40$.

\bibliography{main_v3}

\end{document}